\documentclass[journal]{IEEEtran}
\usepackage{booktabs}
\usepackage{amsmath,amssymb,amsthm,bm}
\usepackage{url}        % safe URL line breaks
\usepackage{microtype}  % helps small overflow in text
\usepackage[T1]{fontenc}
\usepackage[utf8]{inputenc}
\usepackage{cite}
\usepackage{graphicx}
\usepackage{mathtools}
\usepackage{xspace}
\usepackage{xcolor}
\usepackage{placeins}
\usepackage{tikz}
\usetikzlibrary{arrows.meta,calc,positioning,decorations.pathreplacing}
\usepackage[caption=false,font=footnotesize]{subfig}
\newcommand{\No}{N_o}
\newcommand{\Ns}{N_s}
\newcommand{\Ntot}{N}
\newcommand{\Prob}{\mathbb{P}}
\newcommand{\E}{\mathbb{E}}
\newcommand{\Ttwo}{\mathbb{T}^2}        % torus (quotient space)
\newcommand{\Tcell}{[0,2\pi)\times[0,2\pi)} % fundamental rectangle

\DeclareMathOperator{\Var}{Var}

\newtheorem{theorem}{Theorem}
\newtheorem{lemma}{Lemma}
\newtheorem{remark}{Remark}
\newtheorem{corollary}{Corollary}

\begin{document}

\title{Densification Converses for Walker Constellations}
\author{Ali~Khalesi and Fran\c{c}ois~Baccelli%
\thanks{A. Khalesi is with IPSA -- Institut Polytechnique des Sciences Avanc\'ees, Ivry-sur-Seine, France, and also with LINCS (Laboratory for Information, Networking and Communication Sciences), Paris area, France (e-mail: ali.khalesi@ipsa.fr).}%
\thanks{F. Baccelli is with Inria Paris, Paris, France, and also with T\'el\'ecom Paris and LINCS (Laboratory for Information, Networking and Communication Sciences), Paris area, France (e-mail: francois.baccelli@inria.fr).}%
}
\maketitle
\begin{abstract}
We establish densification converses for Walker LEO constellations under nearest-visible association in the full-frequency-reuse setting. Performance is evaluated under the invariant (stationary) measure induced by the constellation/Earth dynamics on the user--constellation ``phase state.'' A key Walker-specific feature, absent from unbounded planar models, is that association is restricted to a bounded visible cap determined by Earth geometry. Under power-law path-loss, a two-level antenna-gain model, i.i.d.\ nonnegative fading with unit mean and finite second moment, and nonzero noise, we prove that increasing the total satellite count $N=N_oN_s$ forces the aggregate interference to grow at least linearly in $N$, while the useful signal remains uniformly bounded above. Consequently, the downlink SINR coverage probability at any fixed threshold and the ergodic spectral efficiency both vanish as $N\to\infty$. 
These collapse statements concern link-level performance on one shared
time--frequency resource block and do not imply that the aggregate capacity of
the overall multi-resource network vanishes. The key technical ingredient is a deterministic visibility-annulus block lemma, uniform over all sufficiently large constellations and all ``phase states,'' showing that a fixed fraction of visible satellites lies in a distance annulus strictly inside the horizon; this yields explicit finite-$N$ collapse bounds. In particular, we derive nonasymptotic $O(1/N)$ upper bounds on both coverage and ergodic spectral efficiency. Finally, in the case of frequency reuse through independent thinning, with activity probability $q$, we show that avoiding densification collapse necessarily requires $qN=O(1)$, equivalently a reuse factor $\Omega(N)$, and we obtain a corresponding explicit $O(1/(qN))$ upper bound.
\end{abstract}

\begin{IEEEkeywords}
LEO satellite constellations, Walker constellation, invariant measure, ergodic dynamical systems, densification converse, stochastic geometry, interference scaling, SINR coverage, spectral efficiency, frequency reuse.
\end{IEEEkeywords}

\section{Introduction}
Low-Earth-orbit (LEO) mega-constellations are increasingly deployed using
highly structured Walker designs, where satellites populate a regular two-angle
grid and evolve through coupled orbital and Earth-rotation dynamics
\cite{ChoiBaccelliWalker}. A common engineering intuition is that
\emph{densification}---increasing the total number of satellites
$\Ntot=\No\Ns$---should improve service by reducing the typical
serving distance and by increasing the likelihood of finding a strong visible
link. Under full frequency reuse, however, densification also increases the
number of simultaneously visible interferers, including through secondary lobes.
This paper establishes \emph{densification converses} showing that, in Walker
constellations under the invariant (stationary) measure of the dynamics,
interference growth dominates distance gain and forces coverage and rate to
collapse as $\Ntot$ grows. From a wireless-PHY perspective, our focus is the downlink performance on a
single time--frequency \emph{resource block}. Full reuse is adopted as the
analytically transparent starting point and limiting benchmark of this
per-resource-block analysis, rather than as a complete model of practical LEO
operation. It isolates the interference effect of densification before the
introduction of reuse and activity control in
Section~\ref{sec:reuse_scaling}.  Densification reduces the typical
serving distance, but it simultaneously increases the number of co-channel
visible interferers, pushing the system toward an interference-limited regime
unless reuse, scheduling, or other PHY-layer interference management scales
with $N$. Our results therefore provide \emph{necessary} reuse and activity-scaling laws for LEO downlinks under standard PHY ingredients: path loss, antenna gain, fading, and thermal noise. In terrestrial stochastic-geometry models, densification can exhibit regimes in which the SIR distribution is approximately invariant under the power-law attenuation assumption. In the present setting, however, the geometry is fundamentally different: for a fixed user, the signal power does not increase to infinity with densification, whereas densification necessarily creates a \emph{positive-density} population of interferers within the visibility cap. This forces the aggregate interference to grow with $N$ under full reuse, and it is precisely this bounded-visibility, rigid-structure mechanism that drives the collapse converses established here.

\paragraph*{Positioning relative to
stochastic-geometry models}

Recent stochastic-geometry frameworks provide tractable system-level
characterizations of large LEO networks by randomizing the satellite and user
locations. For example, Kim, Park, and Lee model satellites and users as
Poisson point processes on concentric spheres and derive spatially averaged
coverage expressions for dynamically coordinated satellite beamforming
\cite{KimParkLee2024}. Related work by Kim, Park, Choi, and Lee uses
independent spherical Poisson point processes to analyze spectrum sharing
between LEO satellite and terrestrial networks
\cite{KimParkChoiLee2026}. Binomial, nonhomogeneous Poisson, and
orbit-clustered Cox models provide other tractable snapshot-level
representations of satellite networks
\cite{Okati2020,Talgat2020,Talgat2021,OkatiRiihonen2022,
ChoiCoxTVT2024}.

These stochastic-geometry approaches and the present framework serve
complementary purposes. Point-process models deliberately randomize the
satellite configuration in order to obtain spatially averaged distributions,
coverage probabilities, and rate expressions. In contrast, we retain the
exact regular Walker lattice and represent its evolution through the
deterministic torus map induced by the orbital and Earth-rotation phases
\cite{ChoiBaccelliWalker}. The only phase randomization used for stationary
performance evaluation is the invariant measure on the Walker shift cell; the
underlying constellation geometry itself is not replaced by a Poisson or
binomial configuration.

This distinction enables a different type of conclusion. In particular,
Lemma~\ref{lem:block} establishes that, once
$\min\{\No,\Ns\}$ is sufficiently large, every Walker phase state contains a
positive-density set of visible satellites in a fixed distance annulus.
Consequently, the resulting interference, coverage, and rate converses are
uniform over the entire invariant-measure phase cell, rather than statements
holding only after spatial averaging over a randomized satellite snapshot.
The deterministic annulus construction also produces explicit finite-$N$
constants and the nonasymptotic $O(1/N)$ and $O(1/(qN))$ bounds developed
below. Thus, our contribution is not a competing point-process model, but a
Walker-specific, phase-state--uniform densification converse that preserves
the exact toroidal constellation structure. For planar Poisson cellular networks, 
the conditions under which spectral efficiency collapses when the density of base stations tends to infinity are discussed in \cite{AlAmmouriWCL2020} and \cite{AlAmmouriTIT2019}. The Poisson technique used in \cite{AlAmmouriWCL2020} and \cite{AlAmmouriTIT2019} is not applicable in the Walker setting.

\paragraph*{Paper organization}
Section~II presents the Walker constellation model in the rotating Earth frame
and recalls the invariant stationary measure on the phase cell, together with
the visibility geometry and the nearest-visible association rule. Section~III
states the main densification converse results: a deterministic phase-uniform
visibility-annulus block lemma and its implications for interference growth,
SINR coverage collapse, and ergodic spectral-efficiency collapse under full
reuse, including explicit finite-$N$ constants. Section~IV derives the necessary
reuse/thinning scaling law and provides explicit $O(1/(qN))$ bounds under
independent activity. Section~V discusses the practical implications of the converse results.
Section~VI studies the existence of an optimal densification level for the
mean SINR. Section~VII reports Monte-Carlo simulations illustrating the
collapse mechanism and validating the predicted scaling laws. Section~VIII
concludes the paper and summarizes the main design implications.

\section{System Model}\label{sec:system_model}
We adopt the Walker-delta constellation model viewed under its invariant (stationary)
measure on the phase cell (equivalently, a uniformly distributed initial phase state),
and the rotating Earth reference frame of \cite{ChoiBaccelliWalker}.

\subsection{Geometry and invariant measures}
The three-dimensional space is equipped with an orthonormal basis
$(\bm e_x,\bm e_y,\bm e_z)$, with origin at the center of Earth. The reference
frame rotates with the Earth's spin, so a fixed ground user has fixed
coordinates. Earth has radius $e$ and satellites lie on the sphere of radius
$r>e$ (circular orbits). All orbital planes have common inclination $\phi$. The main analysis uses one spherical shell for notational clarity.
Supplementary Material, Section~S-II, extends the annulus-block argument and
the resulting interference, coverage, rate, and thinning bounds to a finite
collection of Walker shells at different fixed altitudes.

There are $\No$ orbits and $\Ns$ satellites per orbit. Define the phase cell
\[
S:=\Big[0,\frac{2\pi}{\No}\Big)\times\Big[0,\frac{2\pi}{\Ns}\Big),
\]
equipped with the product-uniform invariant measure $Q$. We distinguish the \emph{shift cell} $S$ (where $(\bar\theta,\bar\omega)$ lives) from the
$2\pi$-torus fundamental rectangle $\Tcell=[0,2\pi)\times[0,2\pi)$ and its quotient torus
$\Ttwo=\Tcell/(2\pi\mathbb{Z})^2$ used to parametrize satellite phases. Orbit ascending longitudes are evenly spaced and have initial phase uniformly distributed under the invariant measure:
\begin{align*}
&\bar\theta \sim \mathrm{Unif}\!\Big(0,\frac{2\pi}{\No}\Big),\\&
\theta_i = \frac{2\pi i}{\No}+\bar\theta \ \ (\mathrm{mod}\ 2\pi),
\quad i=0,\ldots,\No-1.
\end{align*}

Satellite phases on each orbit are evenly spaced and have an independent random
phase under the invariant measure:
\begin{align*}
&\bar\omega \sim \mathrm{Unif}\!\Big(0,\frac{2\pi}{\Ns}\Big),\\
&\omega_j = \frac{2\pi j}{\Ns}+\bar\omega \ \ (\mathrm{mod}\ 2\pi),
\quad j=0,\ldots,\Ns-1.
\end{align*}
\paragraph{Phase torus and Walker map}
Define the fundamental rectangle $\Tcell$ and identify the phase torus
$\Ttwo := \Tcell /(2\pi\mathbb{Z})^2$.
Let $\Xi:\Ttwo\to \mathbb{S}^2_r$ denote the (time-0) Walker map that maps a pair
$(\theta,\omega)$ to the corresponding satellite position on the radius-$r$ sphere
with inclination $\phi$.
Then the satellite positions can be written compactly as
\[
X_{i,j} = \Xi(\theta_i,\omega_j),\qquad i=0,\ldots,\No-1,\ \ j=0,\ldots,\Ns-1.
\]
The snapshot constellation at time $0$ is the point measure
\[
\Psi=\sum_{i=0}^{\No-1}\sum_{j=0}^{\Ns-1}\delta_{X_{i,j}},
\]
where $X_{i,j}\in\mathbb{S}^2_r$ is the satellite position determined by
$(\theta_i,\omega_j,\phi,r)$.

For completeness, one explicit parameterization used in \cite{ChoiBaccelliWalker}
writes, at time $0$,
\begin{align*}
x_{i,j} &=
r\sqrt{\cos^2(\omega_j)+\sin^2(\omega_j)\cos^2(\phi)}\,
\cos(\hat\theta_j+\theta_i),\\
y_{i,j} &=
r\sqrt{\cos^2(\omega_j)+\sin^2(\omega_j)\cos^2(\phi)}\,
\sin(\hat\theta_j+\theta_i),\\
z_{i,j} &= r\sin(\omega_j)\sin(\phi),
\end{align*}
with $\hat\theta_j=\mathrm{atan2}(\sin(\omega_j)\cos\phi,\cos\omega_j)$.

\subsection{Dynamics, stationarity, and ergodicity}
Let $\bar v_\theta$ denote the Earth's spin rate and $\bar v_\omega$ the satellite
angular speed on the orbits (both expressed in the rotating frame conventions of
\cite{ChoiBaccelliWalker}). The phase-state process evolves as
\[
\bar\theta_t = \bar\theta-\bar v_\theta t \ (\mathrm{mod}\ 2\pi/\No),\qquad
\bar\omega_t = \bar\omega+\bar v_\omega t \ (\mathrm{mod}\ 2\pi/\Ns),
\]
and the speed ratio is $\rho=\bar v_\theta/\bar v_\omega$.
When $\rho$ is irrational, the induced flow on the phase cell is ergodic and minimal with
respect to the product-uniform measure, implying that time averages equal
ensemble averages under the stationary snapshot law $Q_\Psi$ \cite{ChoiBaccelliWalker}.

\subsection{Typical user, visibility, and association}
Fix a typical user at latitude $l_u$ and longitude $0$. In the rotating frame,
\[
\bm u=(e\cos l_u,\,0,\,e\sin l_u).
\]
A satellite at location $X\in\mathbb{S}^2_r$ is \emph{visible} if it lies above
the local horizon; equivalently (as in \cite{ChoiBaccelliWalker}),
\[
\frac{X\cdot \bm u}{\|X\|\,\|\bm u\|}\ge \frac{e}{r}.
\]
Define the \emph{visible cap} on the satellite sphere by
\[
\mathcal{C}_{\mathrm{vis}}(l_u)
:=
\Big\{X\in\mathbb{S}^2_r:\ \frac{X\cdot \bm u}{\|X\|\,\|\bm u\|}\ge \frac{e}{r}\Big\}.
\]
We say that the latitude $l_u$ is \emph{nondegenerate for the Walker
geometry} if the Walker image contains a strictly visible point, i.e., if
\[
\Xi^{-1}\!\big(\operatorname{int}\mathcal{C}_{\mathrm{vis}}(l_u)\big)
\neq\varnothing.
\]
Equivalently, there exists a phase $(\theta_0,\omega_0)\in\mathbb{T}^2$ for
which the visibility inequality is strict. All results below assume this
nondegeneracy condition.

Let $\mathcal V(\Psi)$ be the set of visible satellites in snapshot $\Psi$, and
$N_{\mathrm{vis}}:=|\mathcal V(\Psi)|$. If
$\mathcal V(\Psi)=\varnothing$, we set the useful signal and SINR to zero.
For all designs covered by Lemma~\ref{lem:block}, the visible set is nonempty.

Let $d(X):=\|X-\bm u\|$ denote the user--satellite distance. Define the horizon (rim) distance
\[
d_{\max}:=\sqrt{r^2-e^2}.
\]
Indeed, at the horizon, the line of sight is tangent to Earth, yielding a right
triangle with hypotenuse $r$ and one leg $e$, hence $d_{\max}^2=r^2-e^2$.
Thus any visible satellite satisfies $d(X)\le d_{\max}$, while always
$d(X)\ge r-e$.

Let $\gamma$ be the geocentric angle between $X$ and $\bm u$, i.e.,
$\cos\gamma = \frac{X\cdot \bm u}{\|X\|\,\|\bm u\|}$.
Since $\|X\|=r$ and $\|\bm u\|=e$, the visibility condition
$\frac{X\cdot \bm u}{\|X\|\,\|\bm u\|}\ge \frac{e}{r}$ is equivalent to
$\cos\gamma\ge e/r$.
Moreover, by the law of cosines,
\[
d(X)^2=\|X-\bm u\|^2=r^2+e^2-2re\cos\gamma.
\]
Hence $\cos\gamma\ge e/r \iff d(X)^2\le r^2-e^2 \iff d(X)\le d_{\max}$.

The user associates to its \emph{nearest visible} satellite:
\[
D(l_u):=\min_{X\in\mathcal V(\Psi)} d(X),
\]
and we denote the serving satellite by $\star$.  The nearest-visible rule is adopted as a geometry-based analytical baseline.
In practical systems, the eligible serving set may also depend on load, beam
scheduling, gateway connectivity, and mobility-aware handover. These mechanisms
can change the finite-$N$ serving distance and, more importantly, the number of
satellites simultaneously active on the considered resource block. The converse mechanism also applies to max-RSRP or bore-sight-based association
whenever the antenna gains and selected useful received power remain uniformly
bounded, while the number of active co-channel satellites in the visibility
annulus grows without bound. Instantaneous max-SINR association is not formally
covered by the present proof because the serving satellite then depends jointly
on fading and interference; such a selection may improve finite-$N$ performance
and shift the densification knee, and its rigorous asymptotic analysis is left
for future work. In contrast, scheduling or reuse that keeps the co-channel
population uniformly bounded can prevent the identified collapse.

\subsection{Propagation, antenna gain, and SINR}
We analyze the downlink of a tagged user on one time--frequency resource
block. Only the associated satellite serves the tagged user. Under the
full-reuse benchmark, every other visible satellite is assumed to have an
active co-channel transmission on that resource block, possibly toward a
different user or beam, and therefore contributes interference at the tagged
user. Thus, full reuse does not mean that all satellites simultaneously serve
the same user. This assumption is an interference stress-test rather than a complete model
of operational LEO scheduling. Practical phased-array beamforming, frequency
reuse, beam hopping, and scheduling reduce either the coupling or the number
of active co-channel satellites. Section~\ref{sec:reuse_scaling} models these
mechanisms through per-resource-block activity with probability $q$, with
$q=1$ recovering full reuse.

Each link contributes the received power
\[
p\,G(d)\,H\,d^{-\alpha},
\qquad \alpha>0,
\]
where the fading satisfies $H\ge 0$, $\E[H]=1$, and $\E[H^2]<\infty$, and the noise power is
$\sigma^2>0$.
Most converses below only use $\E[H]=1$ and $\E[H^2]<\infty$; the optimizer-existence result for the mean SINR (Theorem~\ref{thm:exist_opt_sinr}) additionally relies on
the mild small-ball condition \eqref{eq:small_ball_H}.
 We assume the fading variables $(H_i)$ are i.i.d. across links and independent of the snapshot constellation $\Psi$. Supplementary Material, Section~S-III, extends the converse to heterogeneous
satellite channels with distance-, elevation-, and state-dependent path loss
and non-identically distributed fading. Whenever one strictly visible
phase-space block has a uniformly positive mean received contribution and at
most linearly growing aggregate variance, the interference remains
$\Omega(N)$, coverage remains $O(1/N)$, and the ergodic rate vanishes.
Under conditional serving/interferer independence, the rate remains
$O(1/N)$, while independent activity still gives the coverage order
$O(1/(qN))$.

We adopt the two-level antenna-gain model of \cite{ChoiBaccelliWalker}:
there exist constants $g_t\ge 1$, $g_r>0$, and a cutoff distance $d_g>0$ such that
\[
G(d)=g_t g_r \ \text{for } d\le d_g,\qquad G(d)=g_r \ \text{for } d>d_g,
\]
hence $G(d)\ge g_r>0$ for all $d$. The two-level gain should be interpreted as an effective post-beamforming
link coupling. Phased arrays and sidelobe suppression may reduce $g_r$ and
therefore improve the finite-$N$ performance or shift the densification knee.
The full-reuse converse applies when the co-channel links retain a uniformly
positive coupling in the guaranteed annulus. Perfect nulling or coordinated
scheduling that keeps the aggregate co-channel contribution bounded may avoid
the identified collapse and is not ruled out by the theorem. The two-level pattern is retained because it yields transparent explicit
constants. Supplementary Material, Section~S-I, extends the annulus-block
argument to bounded continuous directional gains and evaluates a circularized
3GPP TR~38.901-based antenna pattern \cite{3GPP38901} within the NTN setting
of TR~38.811 \cite{3GPP38811}, together with a circular-aperture Bessel
spot-beam pattern. The supplementary results show that directional antennas
shift the finite-density performance knee, while the full-reuse interference
growth and the stabilizing effect of $qN=\Theta(1)$ remain.

Let $V_i\in\{0,1\}$ indicate visibility of satellite $i$, and let $d_i$ be its
distance to the user. Define signal power, interference power, and SINR as:
\begin{align*}
S&:=p\,G(D(l_u))\,H_0\,D(l_u)^{-\alpha},\\
I&:=\sum_{i\neq 0} V_i\,p\,G(d_i)\,H_i\,d_i^{-\alpha},\\
\mathrm{SINR}&:=\frac{S}{I+\sigma^2}.
\end{align*}
\begin{figure*}[!t]
\centering
\definecolor{navy}{RGB}{36,88,150}
\definecolor{sky}{RGB}{222,237,249}
\definecolor{orbitgray}{RGB}{148,156,168}
\definecolor{darkgray}{RGB}{74,82,94}
\definecolor{warm}{RGB}{196,112,48}
\definecolor{muted}{RGB}{198,203,210}

\resizebox{0.98\textwidth}{!}{%
\begin{tikzpicture}[
  font=\sffamily\footnotesize,
  >=Latex,
  line cap=round,
  line join=round,
  sat/.style={circle,draw=navy,fill=white,line width=.65pt,minimum size=4.8pt,inner sep=0pt},
  active/.style={sat,fill=navy},
  mutedSat/.style={circle,draw=orbitgray,fill=white,line width=.55pt,minimum size=4.4pt,inner sep=0pt},
  label/.style={font=\sffamily\scriptsize,fill=white,fill opacity=.94,text opacity=1,inner sep=1.5pt,rounded corners=.8pt},
  panel/.style={font=\sffamily\bfseries\small,fill=white,inner sep=1pt}
]

% ================================================================
% (a) PHASE TORUS
% ================================================================
\node[panel,anchor=west] at (-7.00,3.15) {(a) Walker phase torus};

\draw[rounded corners=2.5pt,thick,darkgray]
  (-7.00,-1.60) rectangle (-2.45,2.35);

\draw[->,navy,thick] (-6.78,2.56) -- (-2.68,2.56);
\draw[->,navy,thick] (-7.23,-1.37) -- (-7.23,2.10);
\node[navy,label] at (-4.73,2.56) {$\theta$};
\node[navy,label,rotate=90] at (-7.23,.36) {$\omega$};

\foreach \x in {-6.60,-5.95,-5.30,-4.65,-4.00,-3.35,-2.70}{
  \draw[muted,line width=.35pt] (\x,-1.43)--(\x,2.17);
}
\foreach \y in {-1.17,-.59,-.01,.57,1.15,1.73}{
  \draw[muted,line width=.35pt] (-6.83,\y)--(-2.62,\y);
}

\foreach \x in {-6.48,-5.83,-5.18,-4.53,-3.88,-3.23}{
  \foreach \y in {-1.03,-.45,.13,.71,1.29,1.87}{
    \fill[darkgray] (\x,\y) circle (1.15pt);
  }
}

\draw[->,warm,thick] (-6.83,-1.43) -- (-6.48,-1.43)
  node[midway,below=2pt,label] {$\bar\theta$};
\draw[->,warm,thick] (-7.00,-1.43) -- (-7.00,-1.03)
  node[midway,left=2pt,label] {$\bar\omega$};

\fill[sky,opacity=.88] (-4.87,-.74) rectangle (-3.05,1.00);
\draw[navy,very thick] (-4.87,-.74) rectangle (-3.05,1.00);
\node[navy,font=\sffamily\bfseries,fill=white,inner sep=1pt] at (-3.96,1.20) {$B$};

\draw[decorate,decoration={brace,amplitude=3pt},darkgray]
  (-6.48,-1.22)--(-5.83,-1.22)
  node[midway,below=4pt,label] {$2\pi/N_o$};
\draw[decorate,decoration={brace,amplitude=3pt},darkgray]
  (-6.68,-1.03)--(-6.68,-.45)
  node[midway,left=4pt,label] {$2\pi/N_s$};

\node[label,align=center] at (-4.73,-1.92)
  {$N_o$ orbit longitudes $\times$ $N_s$ satellite phases};

\draw[->,navy,very thick] (-2.05,.34)--(-.30,.34);
\node[navy,font=\sffamily\bfseries,fill=white,inner sep=1pt] at (-1.18,.67) {$\Xi$};
\node[label,align=center] at (-1.18,.06) {Walker map};

% ================================================================
% (b) PHYSICAL WALKER CONSTELLATION
% ================================================================
\node[panel,anchor=west] at (0.10,3.48) {(b) Physical constellation and tagged downlink};

\coordinate (C) at (3.55,-.65);
\def\Re{1.12}
\def\Rx{2.95}
\def\Ry{1.10}
\def\Rshell{3.05}

\draw[orbitgray,densely dashed,line width=.6pt] (C) circle (\Rshell);
\foreach \ang in {-70,-46,-22,2,26,50,74}{
  \begin{scope}[shift={(C)},rotate=\ang]
    \draw[orbitgray,line width=.52pt] (0,0) ellipse [x radius=\Rx,y radius=\Ry];
  \end{scope}
}

\foreach \ang/\phase in {-70/8,-46/24,-22/40,2/56,26/72,50/88,74/104}{
  \foreach \t in {0,60,...,300}{
    \pgfmathsetmacro{\tt}{\t+\phase}
    \pgfmathsetmacro{\xa}{\Rx*cos(\tt)}
    \pgfmathsetmacro{\ya}{\Ry*sin(\tt)}
    \pgfmathsetmacro{\xr}{\xa*cos(\ang)-\ya*sin(\ang)}
    \pgfmathsetmacro{\yr}{\xa*sin(\ang)+\ya*cos(\ang)}
    \node[mutedSat] at ($(C)+(\xr,\yr)$) {};
  }
}

\fill[sky] (C) circle (\Re);
\draw[darkgray,thick] (C) circle (\Re);
\draw[darkgray,densely dashed,line width=.6pt]
  ($(C)+(-\Re,0)$) arc[start angle=180,end angle=360,x radius=\Re,y radius=.34];
\draw[darkgray,line width=.6pt]
  ($(C)+(\Re,0)$) arc[start angle=0,end angle=180,x radius=\Re,y radius=.34];
\node[darkgray,label] at ($(C)+(0,-.13)$) {Earth};

\foreach \ang in {-70,-46,-22,2,26,50,74}{
  \begin{scope}[shift={(C)},rotate=\ang]
    \draw[navy!68,line width=.78pt]
      plot[domain=180:360,samples=70] ({\Rx*cos(\x)},{\Ry*sin(\x)});
  \end{scope}
}

\draw[darkgray,thin] ($(C)+(-1.50,0)$)--($(C)+(1.50,0)$);
\draw[navy,thick] ($(C)+(-1.25,-.45)$)--($(C)+(1.25,.45)$);
\draw[->,warm,thick] ($(C)+(1.10,0)$)
  arc[start angle=0,end angle=20,radius=1.10];
\node[warm,label] at ($(C)+(1.48,.26)$) {$\phi$};

\coordinate (U) at ($(C)+(0,\Re)$);
\fill[darkgray] (U) circle (2.0pt);
\node[label,anchor=south west] at ($(U)+(.10,.08)$) {$u(l_u)$};
\draw[darkgray,densely dashed] ($(U)+(-2.65,0)$)--($(U)+(2.65,0)$);
\node[label,anchor=west] at ($(U)+(2.12,.13)$) {horizon};
\draw[navy,very thick] ($(C)+(31:\Rshell)$)
  arc[start angle=31,end angle=149,radius=\Rshell];
\node[navy,label,anchor=east] at ($(C)+(-1.43,2.72)$) {$\mathcal C_{\rm vis}(l_u)$};

\coordinate (Sstar) at ($(C)+(88:2.55)$);
\coordinate (Si) at ($(C)+(48:2.82)$);
\coordinate (Sj) at ($(C)+(133:2.75)$);
\coordinate (Sm) at ($(C)+(65:2.82)$);
\node[active] at (Sstar) {};
\node[active] at (Si) {};
\node[active] at (Sj) {};
\node[mutedSat] at (Sm) {};

\draw[navy,very thick,->] (Sstar)--(U);
\draw[warm,dashed,thick,->] (Si)--(U);
\draw[warm,dashed,thick,->] (Sj)--(U);

\node[label,anchor=south] at ($(Sstar)+(0,.16)$) {$X_\star$};
\node[label,anchor=west] at ($(Si)+(.13,.02)$) {$X_i$};
\node[label,anchor=east] at ($(Sj)+(-.13,.02)$) {$X_j$};
\node[label,anchor=south west] at ($(Sm)+(.10,.10)$) {$A_k=0$};
\node[navy,label] at ($(U)!.57!(Sstar)+(.18,0)$) {$D$};
\node[warm,label] at ($(U)!.60!(Si)+(.14,.03)$) {$d_i$};

\draw[->,darkgray,thin] (C)--($(C)+(-.72,-.82)$);
\node[label,anchor=east] at ($(C)+(-.54,-.48)$) {$e$};
\draw[->,darkgray,thin] (C)--($(C)+(-2.30,-1.70)$);
\node[label,anchor=north east] at ($(C)+(-1.55,-1.20)$) {$r$};

\node[label,draw=orbitgray,rounded corners=2pt,align=left,anchor=west]
  (orbits) at (7.10,1.35)
  {$N_o$ equally spaced\\orbital planes};
\draw[orbitgray,->] (orbits.west) .. controls (6.72,1.28) and (6.25,1.02) .. ($(C)+(1.66,1.15)$);

\node[label,draw=orbitgray,rounded corners=2pt,align=left,anchor=west]
  (sats) at (7.10,.30)
  {$N_s$ equally spaced\\satellites per orbit};
\draw[orbitgray,->] (sats.west) .. controls (6.72,.18) and (6.25,.05) .. ($(C)+(2.12,.32)$);

\node[active] at (7.35,-.78) {};
\node[label,anchor=west] at (7.50,-.78) {active satellite};
\node[mutedSat] at (7.35,-1.18) {};
\node[label,anchor=west] at (7.50,-1.18) {inactive on PRB};
\draw[navy,very thick,->] (7.18,-1.60)--(7.58,-1.60);
\node[label,anchor=west] at (7.68,-1.60) {desired link};
\draw[warm,dashed,thick,->] (7.18,-2.02)--(7.58,-2.02);
\node[label,anchor=west] at (7.68,-2.02) {co-channel interference};

\node[font=\sffamily\scriptsize,align=center,text=darkgray,fill=white,inner sep=2pt]
  at (3.72,-4.12)
  {$X_{i,j}=\Xi(\theta_i,\omega_j)$,\quad
   $\theta_i=2\pi i/N_o+\bar\theta$,\quad
   $\omega_j=2\pi j/N_s+\bar\omega$};

\end{tikzpicture}%
}
\caption{
System-model illustration.
(a) The Walker constellation is represented by a shifted lattice on the phase torus, where the rectangle $B$ denotes the local block used in the deterministic counting argument.
(b) The corresponding physical Walker constellation consists of $N_o$ orbital planes and $N_s$ satellites per orbit. A tagged user $u(l_u)$ associates with the nearest visible satellite $X_\star$, while other active satellites on the same resource block create co-channel interference. The visible cap $\mathcal C_{\rm vis}(l_u)$, the serving distance $D$, and interfering-link distances $d_i$ are also shown.
}
\label{fig:system_model}
\end{figure*}
Figure~\ref{fig:system_model} connects the deterministic phase-torus
representation with the corresponding physical Walker constellation. The
shifted $N_o\times N_s$ lattice determines the satellite phases, while the
Walker map $\Xi$ maps each lattice point to a satellite position on one of the
$N_o$ equally spaced orbital planes, with $N_s$ regularly spaced satellites
per orbit. The highlighted rectangle $B$ represents the phase-space block used
in Lemma~\ref{lem:block} to guarantee a positive-density set of visible
satellites. In the physical downlink, the tagged user $u(l_u)$ associates with its nearest
visible satellite $X_\star$ at distance $D$. Under full reuse, the remaining
active visible satellites transmit on the same resource block and contribute
co-channel interference through links of distances $d_i$. Scheduling,
frequency reuse, or coordinated muting corresponds to deactivating a subset of
these potential interferers, as illustrated by the inactive satellite with
$A_k=0$.
We relabel the $N$ satellites so that index $0$ denotes the serving satellite
and indices $1,\ldots,N-1$ denote the remaining satellites; the interference
sum is over these remaining satellites.
We study the coverage probability and ergodic spectral efficiency:
\begin{align*}
P_{\mathrm{cov}}(\No,\Ns;l_u,\tau)&:=\Prob(\mathrm{SINR}>\tau),\\
C_{\mathrm{erg}}(\No,\Ns;l_u)&:=\E\!\big[\log_2(1+\mathrm{SINR})\big].
\end{align*}
We also define the total number of satellites in the constellation by
\begin{align*}
\Ntot&:=\No\Ns.
\end{align*}

\section{Main Results}\label{sec:main_results}
We now state the densification converses. The first theorem gives an ergodic
(asymptotic) collapse under full reuse. We then establish a phase-state--uniform,
deterministic positive-density block lemma, which yields explicit finite-$N$
bounds (with explicit constants), an explicit rate--density rule, and a necessary
reuse scaling law.

For ease of reference, Table~\ref{tab:results_summary} summarizes the main
assumptions, constants, conclusions, and logical dependencies of the results.
In particular, Lemma~\ref{lem:block} provides the common geometric input used
throughout the subsequent converse proofs.

\begin{table*}[!t]
\centering
\caption{Summary of the assumptions, constants, and logical
structure of the main results. Here $N=N_oN_s$ and
$m=\lfloor\beta N\rfloor-1$.}
\label{tab:results_summary}
\scriptsize

\renewcommand{\arraystretch}{1.18}
\begin{tabular}{
@{}
p{0.105\textwidth}
p{0.245\textwidth}
p{0.225\textwidth}
p{0.345\textwidth}
@{}}
\toprule
Result &
Main assumptions &
Principal constants &
Conclusion and role
\\
\midrule

Lemma~\ref{lem:block}
&
Nondegenerate Walker visibility and
$\min\{N_o,N_s\}\ge n_0$.
&
$d_1,d_2,\beta,n_0$, with
$0<d_1<d_2<d_{\max}$.
&
For every phase state, at least
$\lfloor\beta N\rfloor$ visible satellites lie in the fixed annulus
$[d_1,d_2]$. This is the common geometric input for the subsequent results.
\\
\addlinespace

Theorem~\ref{thm:densification}
&
Full reuse, nearest-visible association, and i.i.d.\ nonnegative fading with
$\E[H]=1$ and $\E[H^2]<\infty$. Assumption (A1) is needed only for the
long-time ergodic interpretation.
&
$c_0=p g_r d_2^{-\alpha}$ and
$c_I=(\beta/2)p g_r d_2^{-\alpha}$.
&
Establishes $\E[I]\ge c_I N$ and the asymptotic collapse
$P_{\mathrm{cov}}\to0$ and $C_{\mathrm{erg}}\to0$.
\\
\addlinespace

Theorem~\ref{thm:exp_cov}
&
Assumptions of Theorem~\ref{thm:densification} and a fixed SINR threshold
$\tau>0$.
&
$A_{\mathrm{geom}}
=(d_2/(r-e))^\alpha$ and
$m=\lfloor\beta N\rfloor-1$.
&
Provides the explicit finite-$N$ coverage bound
\[
P_{\mathrm{cov}}
\le
\frac{4\Var(H)}{m}
+
\frac{2g_tA_{\mathrm{geom}}}{\tau m}
=
O(1/N).
\]
\\
\addlinespace

Theorem~\ref{thm:rate_density}
&
Assumptions of Theorem~\ref{thm:densification}.
&
$A=(d_2/(r-e))^\alpha$ and
$m=\lfloor\beta N\rfloor-1$.
&
Provides an explicit finite-$N$ upper bound of order $O(1/N)$ on the
ergodic spectral efficiency.
\\
\addlinespace

Theorem~\ref{thm:reuse_scaling} and
Corollary~\ref{cor:thin_limit}
&
Independent per-resource-block activity with probability $q$, in addition to
the general fading assumptions.
&
\[
\begin{aligned}
K_{\mathrm{thin}}(\tau)
&=
\frac{8\E[H^2]}{\beta}
\\
&\quad+
\frac{4g_t}{\beta\tau}
\left(\frac{d_2}{r-e}\right)^\alpha .
\end{aligned}
\]
&
Establishes
$P_{\mathrm{cov}}^{(q)}
\le K_{\mathrm{thin}}(\tau)/(qN)$.
Thus, under independent thinning, avoiding collapse requires
$qN=O(1)$.
\\
\addlinespace

Theorem~\ref{thm:exist_opt_sinr}
&
Assumptions of Theorem~\ref{thm:densification} and the small-ball condition
$\Prob(H\le x)\le c_{\mathrm{sb}}x^\kappa$.
&
$C_H=4+64\Var(H)$,
$K_{\mathrm{SINR}}$, and $N_{\max}$.
&
Establishes $J_{\mathrm{SINR}}\le K_{\mathrm{SINR}}/N$ for sufficiently
large $N$ and reduces the mean-SINR maximization to a finite search region.
\\

\bottomrule
\end{tabular}
\end{table*}

\begin{remark}[uniform (over the invariant-measure state space) mechanism]
All satellites are assumed to transmit on the same \emph{resource block}
(i.e., the same time--frequency resource) unless stated otherwise. For latitudes that are nondegenerate for the Walker geometry,
Lemma~\ref{lem:block} guarantees
that for all $\min\{\No,\Ns\}$ large enough there exists
$\beta\in(0,1)$ such that at least
$\lfloor \beta\,\No\Ns\rfloor$ \emph{visible} satellites fall inside a fixed
\emph{visibility annulus} (a distance annulus strictly inside the horizon)
\emph{for every phase state in the invariant-measure cell}. This deterministic
\emph{annulus block} (a positive-density subset of visible satellites confined
to that annulus) is the key to making the collapse bounds non-asymptotic and
robust with respect to the invariant-measure state (not only for a ``typical''
phase state).
\end{remark}

%==================== Expanded Lemma + Proof (with horizon margin) ====================

\begin{lemma}[Deterministic positive-density annulus block]\label{lem:block}
Fix a latitude $l_u$ that is nondegenerate for the Walker geometry, and write
$d_{\max}:=\sqrt{r^2-e^2}$ for the horizon (rim) distance.
Let $\Tcell$ be the fundamental rectangle and $\Ttwo$ the corresponding torus and let
$\Xi:\Ttwo\to\mathbb{S}^2_r$ be the Walker map (time-$0$) described in
Section~\ref{sec:system_model}.

Then there exist constants (depending only on $(l_u,\phi,r,e)$ through the Walker map $\Xi$ and the visibility geometry, but \emph{not} on $(\No,\Ns)$ and \emph{not} on the phase state $(\bar\theta,\bar\omega)$)
\[
0<d_1<d_2<d_{\max},\qquad \beta\in(0,1),\qquad n_0<\infty,
\]
with $d_2\le d_{\max}-\varepsilon_0$ for some $\varepsilon_0>0$, and a rectangle
\[
B\subset \Tcell,
\]
such that for all $\min\{\No,\Ns\}\ge n_0$ and for \emph{every} phase state
$(\bar\theta,\bar\omega)\in S$, the shifted grid,
\begin{align*}
\mathcal{G}(\bar\theta,\bar\omega)
=\Big\{\big(\bar\theta+\tfrac{2\pi i}{\No},\ \bar\omega+\tfrac{2\pi j}{\Ns}\big)\ (\mathrm{mod}\ 2\pi):\\
i=0,\ldots,\No-1,\ j=0,\ldots,\Ns-1\Big\}
\subset \Tcell
\end{align*}
contains at least $\lfloor \beta\,\No\Ns\rfloor$ points in $B$.
Equivalently, for every phase state $(\bar\theta,\bar\omega)\in S$ and all $\min\{\No,\Ns\}\ge n_0$, the snapshot constellation
contains at least $\lfloor \beta\,\No\Ns\rfloor$ \emph{visible} satellites with
distances $d_{i,j}\in[d_1,d_2]$.
\end{lemma}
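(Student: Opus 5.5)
The plan is to obtain the block $B$ from continuity of the Walker map and then to count grid points in $B$ directly. Start from nondegeneracy: there is a phase $(\theta_0,\omega_0)\in\Ttwo$ with $\Xi(\theta_0,\omega_0)\in\operatorname{int}\mathcal{C}_{\mathrm{vis}}(l_u)$, so
\[
\cos\gamma(\theta_0,\omega_0) > e/r,
\]
and equivalently $d_0:=d(\Xi(\theta_0,\omega_0))<d_{\max}$. The map $(\theta,\omega)\mapsto d(\Xi(\theta,\omega))$ is continuous on $\Ttwo$, since the explicit parameterization is smooth; the $\mathrm{atan2}$ term only combines as $\cos(\hat\theta_j+\theta_i)$, $\sin(\hat\theta_j+\theta_i)$, which is a rotation of a smooth orbit. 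Continuity gives a closed rectangle
\[
B=[\theta_0-\delta,\theta_0+\delta]\times[\omega_0-\delta,\omega_0+\delta]
\]
(taken mod $2\pi$, and placed in $\Tcell$ after splitting it if it wraps; taking $\delta<\pi/2$ makes the splitting harmless) on which $d(\Xi(\cdot))\le d_2:=(d_0+d_{\max})/2$. Set $\varepsilon_0:=(d_{\max}-d_0)/2>0$, so $d_2=d_{\max}-\varepsilon_0$. Every point of $B$ is then visible, because $d\le d_2<d_{\max}$ is equivalent to $\cos\gamma>e/r$, as shown in Section~\ref{sec:system_model}. For the lower endpoint, put $d_1:=r-e>0$, which is always a valid lower bound on $d$. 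If $d_1<d_2$ must hold strictly, note that $d_2>d_0\ge r-e$ already gives this.

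The counting step uses the product structure of the grid: $\mathcal G(\bar\theta,\bar\omega)=\Theta\times\Omega$, where
\[
\Theta=\{\bar\theta+2\pi i/\No\}\quad\text{and}\quad \Omega=\{\bar\omega+2\pi j/\Ns\}.
\]
An interval of length $2\delta$ on the circle contains at least $\lfloor 2\delta\No/(2\pi)\rfloor$ points of $\Theta$, whatever the value of $\bar\theta$, because consecutive points are spaced $2\pi/\No$ apart. Hence
\[
|\mathcal G\cap B|\ \ge\ \Big\lfloor \tfrac{\delta\No}{\pi}\Big\rfloor\Big\lfloor \tfrac{\delta\Ns}{\pi}\Big\rfloor .
\]
Choose $n_0$ with $\delta n_0/\pi\ge 2$. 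Then for $\No,\Ns\ge n_0$ we have $\lfloor\delta\No/\pi\rfloor\ge \delta\No/(2\pi)$, and likewise for $\Ns$. This gives $|\mathcal G\cap B|\ge (\delta/2\pi)^2\No\Ns\ge\lfloor\beta\No\Ns\rfloor$ with $\beta:=(\delta/2\pi)^2\in(0,1)$. The bound is uniform in the phase state because the floor estimate holds for every shift.

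To finish, translate back. Grid points in $B$ are exactly the phases $(\theta_i,\omega_j)$ of satellites $X_{i,j}=\Xi(\theta_i,\omega_j)$, and distinct grid points index distinct satellites. The satellites with phases in $B$ are therefore visible and satisfy $d_{i,j}\in[d_1,d_2]$. Every constant is determined by $\Xi$, $l_u$ and the visibility geometry alone.

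The main obstacle is making the continuity step rigorous. One must check that the parameterization defines a genuinely continuous map on the torus, in particular near $\omega=\pm\pi/2$, where the radial factor and $\hat\theta$ interact. I would handle this by rewriting $\Xi(\theta,\omega)$ as $R_z(\theta)R_x(\phi)(r\cos\omega,r\sin\omega,0)^\top$ and checking that it agrees with the stated coordinates; this form is manifestly smooth. A minor second point is that $B$ may wrap around $2\pi$. The counting argument works on the circle directly, so the only role of splitting is to express $B$ as a subset of $\Tcell$; any constant factor lost there is absorbed into $\beta$.
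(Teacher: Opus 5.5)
Your proof is correct and takes essentially the paper's route: a strictly visible anchor phase, continuity of $\Xi$ to obtain a phase rectangle on which the distance stays below $d_2<d_{\max}$, and shift-uniform floor counting on the product grid, which gives the same constants $\beta=(\delta/2\pi)^2=\tfrac14(\Delta/2\pi)^2$ and $n_0\ge 2\pi/\delta=4\pi/\Delta$ with $\Delta=2\delta$. The differences are cosmetic. You take $d_1=r-e$ where the paper uses the symmetric window $d_0\pm\varepsilon_0$. You verify continuity of $\Xi$ explicitly through $R_z(\theta)R_x(\phi)(r\cos\omega,r\sin\omega,0)^\top$, which the paper leaves implicit. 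For wrapping, you split the rectangle; since the statement asks for a single rectangle in $[0,2\pi)^2$, the paper's device of re-anchoring at an interior representative and shrinking is the cleaner way to meet it.
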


\begin{IEEEproof}
By the nondegeneracy assumption, there exists
$(\theta_0,\omega_0)\in\Ttwo$ such that
$X_0:=\Xi(\theta_0,\omega_0)$ satisfies the strict visibility inequality
\[
\frac{X_0\cdot \bm u}{\|X_0\|\,\|\bm u\|} > \frac{e}{r}.
\]
Hence $d_0:=\|X_0-\bm u\|<d_{\max}$.

Pick
\[
\varepsilon_0 \in \Big(0,\ \min\Big\{\frac{d_0}{2},\ \frac{d_{\max}-d_0}{2}\Big\}\Big),
\]
and set
\[
d_1:=d_0-\varepsilon_0,\qquad d_2:=d_0+\varepsilon_0.
\]
Then $0<d_1<d_2<d_{\max}$ and moreover $d_2\le d_{\max}-\varepsilon_0$.
Define the annulus region on the sphere
\[
\mathcal{U}:=\{X\in\mathbb{S}^2_r:\ \|X-\bm u\|\in(d_1,d_2)\}.
\]
By continuity of the distance map $X\mapsto \|X-\bm u\|$, the set $\mathcal{U}$ is open in $\mathbb{S}^2_r$ and contains $X_0$.
Consequently, its preimage
\[
A:=\Xi^{-1}(\mathcal{U})\subset\Ttwo
\]
is open in $\Ttwo$ and contains $(\theta_0,\omega_0)$; hence $A$ is
nonempty. Because $A$ is open, it contains a torus neighborhood with
positive extent in both coordinate directions. A representative of this
neighborhood in the fundamental rectangle $\Tcell$ may initially cross one
or both identified boundaries. However, by translating the coordinates
modulo $2\pi$ and, if necessary, taking a smaller closed subrectangle, it can
always be represented without wrapping.

Equivalently, the coordinate-boundary subtori have empty interior, so the
nonempty open set $A$ contains a point whose representative lies in the
interior $(0,2\pi)^2$ of $\Tcell$. By openness, a sufficiently small
axis-aligned closed rectangle around that representative remains contained
in $A\cap\Tcell$.

Therefore, there exist $\Delta_\theta,\Delta_\omega>0$ and
$a\in(0,2\pi-\Delta_\theta),
b\in(0,2\pi-\Delta_\omega)$
such that
\[
B=[a,a+\Delta_\theta]\times[b,b+\Delta_\omega]
\subset A\cap\Tcell.
\]

Now fix any phase state $(\bar\theta,\bar\omega)\in S$ and consider the shifted grid
$\mathcal{G}(\bar\theta,\bar\omega)$.
On the circle, any arc of length $\Delta_\theta$ contains at least
$\big\lfloor \No\Delta_\theta/(2\pi)\big\rfloor$ points of the translated
$\No$-point lattice $\{\bar\theta+2\pi i/\No\}_{i=0}^{\No-1}$ (mod $2\pi$),
independently of the shift $\bar\theta$. Likewise, the $\omega$-interval of length
$\Delta_\omega$ contains at least $\big\lfloor \Ns\Delta_\omega/(2\pi)\big\rfloor$
points of the translated $\Ns$-point lattice, independently of $\bar\omega$.
Therefore,
\[
\#\big(\mathcal{G}(\bar\theta,\bar\omega)\cap B\big)\ge
\Big\lfloor \frac{\No\Delta_\theta}{2\pi}\Big\rfloor
\Big\lfloor \frac{\Ns\Delta_\omega}{2\pi}\Big\rfloor.
\]

Let $\beta_0:=\frac{\Delta_\theta}{2\pi}\cdot\frac{\Delta_\omega}{2\pi}>0$.
For all sufficiently large $\No,\Ns$,
\[
\Big\lfloor \frac{\No\Delta_\theta}{2\pi}\Big\rfloor\ge \frac12\,\frac{\No\Delta_\theta}{2\pi},
\qquad
\Big\lfloor \frac{\Ns\Delta_\omega}{2\pi}\Big\rfloor\ge \frac12\,\frac{\Ns\Delta_\omega}{2\pi},
\]
hence
\[
\#\big(\mathcal{G}(\bar\theta,\bar\omega)\cap B\big)\ge \frac14\,\beta_0\,\No\Ns.
\]
Set $\beta:=\beta_0/4$ and choose $n_0$ so that the bound holds for all $\No\ge n_0$ and $\Ns\ge n_0$,
equivalently for all $\min\{\No,\Ns\}\ge n_0$.
Since $B\subset A=\Xi^{-1}(\mathcal{U})$, each such grid point corresponds to a
visible satellite with distance in $[d_1,d_2]$. This proves the lemma.
\end{IEEEproof}
\begin{remark}[On computable constants $(\beta,d_1,d_2)$]
\label{rem:explicit_constants}
Lemma~\ref{lem:block} is constructive in the following sense: choose any
visible phase $(\theta_0,\omega_0)$ with
$d_0=\|\Xi(\theta_0,\omega_0)-\bm u\|<d_{\max}$, then pick any
$\varepsilon_0\in(0,\min\{d_0/2,(d_{\max}-d_0)/2\})$ and set
$d_1=d_0-\varepsilon_0$ and $d_2=d_0+\varepsilon_0$.
Any rectangle
$B\subset\Xi^{-1}(\{X:\|X-\bm u\|\in(d_1,d_2)\})$
with side lengths $\Delta_\theta,\Delta_\omega$ yields
$\beta=\frac14\frac{\Delta_\theta}{2\pi}
\frac{\Delta_\omega}{2\pi}$
for all sufficiently large $\min\{\No,\Ns\}$.

For the baseline geometry $h=550$~km, $\phi=53^\circ$, and $l_u=0^\circ$,
Section~\ref{subsec:sim_setup} reports one explicitly certified construction
with $d_1=300$~km, $d_2=800$~km,
$\beta=5.770461\times10^{-5}$, and $n_0=132$.
These numerical values depend on the constructive choices of the anchor phase,
$\varepsilon_0$, and the rectangle $B$, whereas the resulting counting
guarantee is uniform over the actual Walker phase state.
\end{remark}
\begin{remark}[Invariant measure and stationary law]
In the Walker model viewed under its invariant (stationary) measure of \cite{ChoiBaccelliWalker}, the state space (fundamental cell) for the invariant measure is
$S := [0,2\pi/\No)\times[0,2\pi/\Ns)$ equipped with the product-uniform (invariant) measure $Q$.
The snapshot constellation $\Psi$ is a measurable image of the phase-state pair
$(\bar\theta,\bar\omega)\in S$. The stationary law $Q_\Psi$ is the push-forward of $Q$
by the Walker map. When the speed ratio is irrational, time averages equal ensemble
averages under $Q_\Psi$ by ergodicity (as established in \cite{ChoiBaccelliWalker}).
\end{remark}

%===============================================================================
%==================== Expanded Theorem + Proof ====================

\begin{theorem}[Ergodic densification converse for Walker constellations]\label{thm:densification}
Consider a Walker constellation with parameters $(\phi,\No,\Ns,r)$ under the
Walker model viewed under its invariant (stationary) measure in \cite{ChoiBaccelliWalker}.
Fix a user at latitude $l_u$ and let $\Psi$ denote the snapshot constellation
under the stationary law $Q_\Psi$.
Assume:
\begin{enumerate}
\item[(A1)] the rotation--speed ratio $\rho=\bar v_\theta/\bar v_\omega$ is irrational;
\item[(A2)] full frequency reuse and nearest-visible association;
\item[(A3)] each downlink power contribution equals $p\,G(d)\,H\,d^{-\alpha}$,
with $\alpha>0$, and i.i.d. fading across links independent of $\Psi$, satisfying $\E[H]=1$ and $\Var(H)<\infty$ (equivalently $\E[H^2]<\infty$), and noise $\sigma^2>0$; moreover,
the two-level antenna gain satisfies $g_r \le G(d)\le g_t g_r$, for all relevant $d$.
\end{enumerate}
Assumption (A1) is only used to justify the long-time geometric
interpretation. For the final long-time statements involving fading, the
fading marks are additionally assumed to be independently resampled over time,
or more generally to form a stationary mixing process independent of the phase
flow. All finite-$N$ and asymptotic snapshot bounds on $P_{\mathrm{cov}}$ and
$C_{\mathrm{erg}}$ hold under $Q_\Psi$ without these temporal assumptions.
Let $d_{\max}:=\sqrt{r^2-e^2}$ be the rim distance of the visible cap (so any
visible satellite has distance $\le d_{\max}$).
Let $V_i\in\{0,1\}$ indicate visibility of satellite $i$, and let $d_i$ be its
distance to the user. Define the useful signal power, interference power, and SINR as
\begin{align*}
S&:=p\,G(D(l_u))\,H_0\,D(l_u)^{-\alpha},\\
I&:=\sum_{i\neq 0} V_i\,p\,G(d_i)\,H_i\,d_i^{-\alpha},\\
\mathrm{SINR}&:=\frac{S}{I+\sigma^2}.
\end{align*}
Let
\begin{align*}
P_{\mathrm{cov}}(\No,\Ns;l_u,\tau)&:=\Prob(\mathrm{SINR}>\tau),\\
C_{\mathrm{erg}}(\No,\Ns;l_u)&:=\E\!\big[\log_2(1+\mathrm{SINR})\big],\\
\Ntot&:=\No\Ns.
\end{align*}
All limits below are taken along Walker-design sequences for which
$\min\{\No,\Ns\}\ge n_0$ eventually.

Then:
\begin{enumerate}
\item \emph{Interference growth}
Let $(d_2,\beta,n_0)$ be as in Lemma~\ref{lem:block} and set $N=\Ntot=\No\Ns$.
For all $\min\{\No,\Ns\}\ge n_0$ and $N\ge 4/\beta$, one may take
\[
c_I := \frac{\beta}{2}\,p\,g_r\,d_2^{-\alpha},
\qquad\text{so that}\qquad
\E[I]\ge c_I\,N .
\]

\item \emph{Coverage collapse}
For every fixed $\tau>0$,
\[
P_{\mathrm{cov}}(\No,\Ns;l_u,\tau)\xrightarrow[\Ntot\to\infty]{}0.
\]

\item \emph{Ergodic-rate collapse}
\[
C_{\mathrm{erg}}(\No,\Ns;l_u)\xrightarrow[\Ntot\to\infty]{}0.
\]
Moreover, an explicit nonasymptotic $O(1/N)$ rate--density bound is given in
Theorem~\ref{thm:rate_density}.

\item \emph{Densification limit under full reuse.}
For any target rate $R>0$ and outage\footnote{For a target spectral efficiency
$R$, we define the outage probability as
$P_{\mathrm{out}}(R)
:=\mathbb{P}\!\left(\log_2(1+\mathrm{SINR})<R\right)
=\mathbb{P}\!\left(\mathrm{SINR}<2^R-1\right)$, consistent with the standard
SINR-threshold definition of outage; see, e.g., \cite{TorrieriValenti2012}.}
$\varepsilon\in(0,1)$, there exists
$M^\star<\infty$ such that if $\Ntot>M^\star$, then under assumptions
(A1)--(A3) the full-reuse downlink on the considered shared resource block
cannot satisfy outage $\le\varepsilon$ at rate $R$ at latitude $l_u$.
\end{enumerate}
Moreover, under (A1), ergodicity implies that the long-time empirical coverage
fraction and long-time empirical average rate coincide with the ensemble
averages defining $P_{\mathrm{cov}}$ and $C_{\mathrm{erg}}$ for $Q$-a.e.\
initial phase state.
\end{theorem}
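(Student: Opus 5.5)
The plan is to derive everything from Lemma~\ref{lem:block}, conditioning on the phase state $(\bar\theta,\bar\omega)\in S$ and using the independence of fading from $\Psi$.

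\emph{Step 1: interference growth.} Fix any phase state with $\min\{\No,\Ns\}\ge n_0$. Lemma~\ref{lem:block} gives at least $\lfloor\beta N\rfloor$ visible satellites with distances in $[d_1,d_2]$. At most one of them is the serving satellite, so the interference sum contains at least $m:=\lfloor\beta N\rfloor-1$ annulus terms. For each of these, $G(d_i)\ge g_r$ and $d_i^{-\alpha}\ge d_2^{-\alpha}$. Conditioning on $\Psi$ and using $\E[H_i]=1$ gives
\[
\E[I\mid\Psi]\ge m\,p\,g_r\,d_2^{-\alpha}.
\]
For $N\ge4/\beta$ we have $\lfloor\beta N\rfloor-1\ge\beta N-2\ge\beta N/2$. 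Hence $\E[I\mid\Psi]\ge c_IN$ uniformly in the phase state, and averaging over $Q$ gives item~1.

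\emph{Step 2: coverage collapse via Chebyshev on a lower interference proxy.} Define $I_B:=\sum_{i\in\mathcal B}p\,G(d_i)H_id_i^{-\alpha}\le I$, where $\mathcal B$ is the set of $m$ non-serving annulus satellites. Given $\Psi$, the variables $H_i$ for $i\in\mathcal B$ are i.i.d. and independent of $H_0$. Since each coefficient lies in $[p g_r d_2^{-\alpha},\,p g_t g_r d_1^{-\alpha}]$, Chebyshev's inequality gives
\[
\Prob\big(I_B<\tfrac12\E[I_B\mid\Psi]\,\big|\,\Psi\big)\le 4\Var(H)\kappa/m,
\]
where $\kappa$ is a bounded coefficient-ratio constant. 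One could sharpen this by using only the annulus coefficient bounds with equal weights. The signal is bounded above because $D\ge r-e$, so $S\le p\,g_tg_r H_0 (r-e)^{-\alpha}$.

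On the event $\{I_B\ge \tfrac12 m\,p\,g_rd_2^{-\alpha}\}$, coverage requires
\[
H_0> \tau\, m\, d_2^{-\alpha}(r-e)^{\alpha}/(2g_t).
\]
By Markov's inequality and $\E[H_0]=1$, this has probability at most $2g_tA_{\mathrm{geom}}/(\tau m)$, with $A_{\mathrm{geom}}=(d_2/(r-e))^\alpha$. This bound is uniform over $\Psi$, so $P_{\mathrm{cov}}=O(1/N)\to0$.

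\emph{Step 3: ergodic-rate collapse.} Use $\log_2(1+x)\le x/\ln2$ on the good event and $\log_2(1+\mathrm{SINR})\le\log_2(1+S/\sigma^2)$ on the bad event. On the good event,
\[
\E[\mathrm{SINR}\,\mathbf 1_{\text{good}}]\le \frac{2g_tA_{\mathrm{geom}}}{m}.
\]
On the bad event, apply Cauchy--Schwarz:
\[
\E\big[\log_2(1+cH_0)\mathbf 1_{\text{bad}}\big]\le \big\|\log_2(1+cH_0)\big\|_2\,\Prob(\text{bad})^{1/2}.
\]
Here $\E[H_0^2]<\infty$, and $\Prob(\text{bad})^{1/2}=O(N^{-1/2})\to0$. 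This gives $C_{\mathrm{erg}}\to0$; the sharper $O(1/N)$ rate is deferred to Theorem~\ref{thm:rate_density}.

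\emph{Step 4: densification limit and ergodic interpretation.} Item~4 follows because $P_{\mathrm{out}}(R)=1-P_{\mathrm{cov}}(2^R-1)$ up to the boundary event. Choose $M^\star$ with $P_{\mathrm{cov}}<1-\varepsilon$ for $N>M^\star$. The long-time statement follows from (A1): the phase flow is uniquely ergodic, and with independently resampled or mixing fading the skew product is ergodic. Birkhoff's theorem applied to the bounded indicator and to the integrable rate function then identifies time averages with $P_{\mathrm{cov}}$ and $C_{\mathrm{erg}}$.

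\emph{Main obstacle.} The hard step is Step~2's concentration. The serving satellite depends on $\Psi$ only, not on fading, so conditioning keeps the $H_i$ independent. The care needed is in keeping all constants uniform over phase states, which Lemma~\ref{lem:block} supplies.
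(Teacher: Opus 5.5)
Your proposal is correct and follows essentially the same route as the paper. The annulus block of Lemma~\ref{lem:block} yields $m=\lfloor\beta N\rfloor-1$ interferers, each with coefficient at least $p\,g_r d_2^{-\alpha}$; Chebyshev then bounds the lower tail of the interference, and Markov handles the uniformly bounded signal. For the rate, the $\log_2(1+x)\le x/\ln 2$ split uses Cauchy--Schwarz on the bad event, and Birkhoff gives the long-time statement. The differences are only cosmetic: the paper lower-bounds every annulus coefficient by $c_0=p\,g_r d_2^{-\alpha}$ before Chebyshev, which is your equal-weight sharpening. For the boundary event in item~4, the paper uses $\Prob(\mathrm{SINR}\ge\tau_R)\le\Prob(\mathrm{SINR}>\tau_R/2)$; you can close your ``up to the boundary event'' just as easily, because Markov also bounds $\Prob(S\ge\tau b)$.
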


\begin{IEEEproof}
%==================== FIX (A4): clean, uniform signal moment bounds ====================
By definition, always $D(l_u)\ge r-e$. Moreover, the antenna gain satisfies
$G(D(l_u))\le g_tg_r$, for all $D(l_u)$. Hence
\[
S=p\,G(D(l_u))\,H_0\,D(l_u)^{-\alpha}
\le p\,g_tg_r\,H_0\,(r-e)^{-\alpha}.
\]
Therefore, using $\E[H]=1$ and $\E[H^2]<\infty$,
\begin{align}  
\E[S]&\le p\,g_tg_r\,(r-e)^{-\alpha},\\
\E[S^2]&\le
p^2(g_tg_r)^2(r-e)^{-2\alpha}\E[H^2]<\infty,
\end{align}
and these bounds are uniform in $(\No,\Ns)$.
%==================== END FIX (A4) ====================================================

On every visible link, the two-level gain satisfies $G(d)\ge g_r>0$ and
$d^{-\alpha}\ge d_{\max}^{-\alpha}$ (since $d\le d_{\max}$ for visible satellites).
Hence
\begin{align*}
I
=\sum_{i\neq0} V_i\,p\,G(d_i)\,H_i\,d_i^{-\alpha}\ge
p\,g_r\,d_{\max}^{-\alpha}\sum_{i\neq0}V_i H_i.
\end{align*}

We now invoke Lemma~\ref{lem:block}. For all $\min\{\No,\Ns\}\ge n_0$ and for every phase state in the invariant-measure cell,
there exist at least $\lfloor \beta\Ntot\rfloor$ \emph{visible} satellites with distances in $[d_1,d_2]$.
At worst, the serving satellite may belong to this set; discarding at most one satellite,
we obtain at least
\[
m:=\lfloor \beta\Ntot\rfloor-1
\]
\emph{visible interferers} satisfying $d_k\le d_2$ and $G(d_k)\ge g_r$.

Hence the interference admits the deterministic lower bound
\[
I \ge \sum_{k=1}^{m} p\,g_r\,H_k\,d_2^{-\alpha}
= c_0 \sum_{k=1}^{m} H_k,
\qquad c_0:=p\,g_r\,d_2^{-\alpha},
\]
where $(H_k)$ are i.i.d.\ copies of $H$.

\emph{Interference growth (item~1).}
Taking expectations and using $\E[H]=1$ gives
\[
\E[I]\ge c_0\,m \ge c_I\,\Ntot,
\]
for all sufficiently large $\Ntot$, for some constant $c_I>0$.

\emph{Coverage collapse (item~2).}
By Chebyshev and the assumption $\E[H^2]<\infty$,
\[
\Prob\!\Big(\sum_{k=1}^{m} H_k \le \tfrac{m}{2}\Big)
\le \frac{4\,\Var(H)}{m}
\xrightarrow[\Ntot\to\infty]{}0.
\]
Let $b_{\Ntot}:=\tfrac12 c_0 m=\Theta(\Ntot)$. Then
\[
\Prob(I\le b_{\Ntot})\le
\Prob\!\Big(\sum_{k=1}^{m} H_k \le \tfrac{m}{2}\Big)\to 0.
\]
For any fixed $\tau>0$,
\begin{align}
\Prob(\mathrm{SINR}>\tau)
&= \Prob(S>\tau(I+\sigma^2)) \\
&\le \Prob(I\le b_{\Ntot}) + \Prob(S>\tau b_{\Ntot}).    
\end{align}
The first term vanishes. The second is bounded via Markov's inequality and the uniform bound $\E[S]\le m_1$:
\[
\Prob(S>\tau b_{\Ntot})
\le \frac{\E[S]}{\tau b_{\Ntot}}
\le \frac{p\,g_tg_r\,(r-e)^{-\alpha}}{\tau b_{\Ntot}}
=O(1/\Ntot)\to 0.
\]
Thus $P_{\mathrm{cov}}(\No,\Ns;l_u,\tau)\to 0$.

Since $\log_2(1+x)\le x/\ln2$ for $x\ge 0$ and $\sigma^2>0$,
\begin{align*}
C_{\mathrm{erg}}
&\le \frac{1}{\ln2}\E\!\Big[\frac{S}{I+\sigma^2}\Big]
\\ &\le \frac{1}{\ln2}\E\!\Big[\frac{S}{b_{\Ntot}+\sigma^2}\Big]
+ \frac{1}{\ln2}\E\!\Big[\frac{S}{\sigma^2}\mathbf{1}_{\{I<b_{\Ntot}\}}\Big].
\end{align*}
The first term is $O(1/\Ntot)$ since $b_{\Ntot}=\Theta(\Ntot)$ and $\E[S]\le m_1$.
The second term vanishes since, by Cauchy--Schwarz,
\[
\E\!\big[S\,\mathbf{1}_{\{I<b_{\Ntot}\}}\big]
\le \sqrt{\E[S^2]}\,\sqrt{\Prob(I<b_{\Ntot})}
\xrightarrow[\Ntot\to\infty]{}0,
\]
and $\sigma^2>0$.

Hence $C_{\mathrm{erg}}\to 0$.
An explicit nonasymptotic $O(1/\Ntot)$ upper bound under general fading is given in
Theorem~\ref{thm:rate_density}.

Fix $R>0$ and set $\tau_R:=2^R-1>0$. The success probability at rate
$R$ satisfies
\begin{align*}
\Prob\!\big(\log_2(1+\mathrm{SINR})\ge R\big)
&=
\Prob(\mathrm{SINR}\ge\tau_R)
\\
&\le
\Prob(\mathrm{SINR}>\tau_R/2).
\end{align*}
By item~2, the final probability tends to zero. Hence there exists
$M^\star$ such that $\Ntot>M^\star$ implies a success probability below
$1-\varepsilon$, equivalently an outage probability above $\varepsilon$.
This proves item~4.

Under (A1), the torus flow induced by the two angular motions is ergodic
with respect to the invariant measure $Q$, and the induced constellation
process is stationary and ergodic under $Q_\Psi$ \cite{ChoiBaccelliWalker}.
Under the stated independent temporal resampling, or the more general mixing
assumption, the joint phase--fading process is also ergodic. Birkhoff's pointwise ergodic
theorem then implies that the long-time empirical averages of
$\mathbf{1}\{\mathrm{SINR}>\tau\}$ and
$\log_2(1+\mathrm{SINR})$ equal their stationary ensemble averages for
$Q$-a.e.\ initial phase state.
\end{IEEEproof}

The asymptotic collapse above becomes \emph{quantitative} once we show that,
uniformly over initial phase states, a positive fraction of satellites always lies in a safe
visible annulus away from the horizon. This is captured by the next theorem.

%==================== Extended Theorem + Proof: Finite-N exponential collapse ====================

\begin{theorem}[Finite-$N$ quantitative coverage collapse (general fading)]
\label{thm:exp_cov}
Assume the setup of Theorem~\ref{thm:densification}, with i.i.d.\ nonnegative
fading $(H_i)$ of unit mean and finite second moment $\E[H^2]<\infty$.
Fix any SINR threshold $\tau>0$. Let $(d_2,\beta,n_0)$ be the constants from
Lemma~\ref{lem:block}, and define
\[
A_{\mathrm{geom}}:=\Big(\frac{d_2}{r-e}\Big)^{\alpha},\qquad
m:=\lfloor \beta\,\No\Ns\rfloor-1.
\]
Then, for all $\min\{\No,\Ns\}\ge n_0$,
\[
P_{\mathrm{cov}}(\No,\Ns;l_u,\tau)
\le
\frac{4\Var(H)}{m}
+
\frac{2g_t}{\tau}\,\frac{A_{\mathrm{geom}}}{m}.
\]
In particular, if $\No\Ns\ge 4/\beta$, then $m\ge (\beta/2)\No\Ns$ and hence
\[
P_{\mathrm{cov}}(\No,\Ns;l_u,\tau)
\le
\frac{8}{\beta}\Big(\Var(H)+\frac{g_tA_{\mathrm{geom}}}{\tau}\Big)\cdot \frac{1}{\No\Ns}.
\]
So the coverage probability decays at least on the order of $1/N$ as
$N=\No\Ns\to\infty$.
\end{theorem}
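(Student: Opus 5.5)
The plan is to make the proof of item~2 of Theorem~\ref{thm:densification} quantitative, tracking every constant instead of passing to the limit. Fix $\min\{\No,\Ns\}\ge n_0$ and, first, condition on the snapshot $\Psi$ (equivalently on the phase state $(\bar\theta,\bar\omega)\in S$). Under nearest-visible association the serving index is a function of $\Psi$ alone. Since the fading is i.i.d.\ and independent of $\Psi$, conditionally on $\Psi$ the fading marks $H_0,H_1,\dots$ remain i.i.d.\ copies of $H$.

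By Lemma~\ref{lem:block}, for this (arbitrary) phase state there are at least $\lfloor\beta\No\Ns\rfloor$ visible satellites with distances in $[d_1,d_2]$. Removing possibly the serving one leaves $m=\lfloor\beta\No\Ns\rfloor-1$ visible interferers. Each has gain at least $g_r$ and path loss at least $d_2^{-\alpha}$, so exactly as in Theorem~\ref{thm:densification} we get the deterministic bound
\[
I\ge c_0\sum_{k=1}^m H_k,\qquad c_0=p\,g_r\,d_2^{-\alpha}.
\]
(We may assume $m\ge1$, since otherwise the claimed bound is vacuous or trivially read as $+\infty$.) Set $b:=\tfrac12 c_0 m$.

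Next, split the coverage event. Because $\sigma^2>0$, $\{\mathrm{SINR}>\tau\}\subset\{S>\tau I\}\subset\{I\le b\}\cup\{S>\tau b\}$. For the first event, Chebyshev gives
\[
\Prob\Big(\sum_{k=1}^m H_k\le \tfrac m2\Big)\le\Prob\Big(\Big|\sum_{k=1}^m H_k-m\Big|\ge \tfrac m2\Big)\le \frac{m\Var(H)}{(m/2)^2}=\frac{4\Var(H)}{m}.
\]
For the second event, use the uniform signal bound $S\le p\,g_tg_r\,H_0\,(r-e)^{-\alpha}$ together with Markov's inequality:
\[
\Prob(S>\tau b)\le\frac{p\,g_tg_r(r-e)^{-\alpha}}{\tau\cdot\tfrac12 p\,g_r d_2^{-\alpha} m}=\frac{2g_t}{\tau}\frac{A_{\mathrm{geom}}}{m}.
\]
Both conditional bounds are uniform in the phase state. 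Integrating over $Q$ therefore yields the first displayed inequality.

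For the second display, $\No\Ns\ge4/\beta$ gives $m\ge\beta\No\Ns-2\ge\beta\No\Ns-\tfrac{\beta}{2}\No\Ns=\tfrac\beta2\No\Ns$. Substituting $1/m\le 2/(\beta\No\Ns)$ gives the factor $\tfrac{8}{\beta}$ after combining terms (the first term actually becomes $8\Var(H)/(\beta N)$ and the second $4g_tA_{\mathrm{geom}}/(\tau\beta N)$, both dominated by the stated bound).

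The only delicate step is the conditioning argument. The selected set of $m$ interferers is random through $\Psi$, so we must justify that their fading marks are i.i.d.\ and that $H_0$ still has unit mean. This holds precisely because the association and interferer selection depend only on geometry, never on fading. Everything else is routine Chebyshev/Markov bookkeeping.
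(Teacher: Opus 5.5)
Your proof is correct and follows the paper's argument essentially step for step: fix the snapshot, use Lemma~\ref{lem:block} to get $I\ge c_0\sum_{k=1}^m H_k$, split $\{\mathrm{SINR}>\tau\}$ at $b=c_0m/2$, bound $\Prob(I\le b)$ by Chebyshev and $\Prob(S>\tau b)$ by Markov via $S\le p\,g_tg_r(r-e)^{-\alpha}H_0$, and finish with $m\ge(\beta/2)\No\Ns$. Your explicit justification that association and interferer selection depend only on $\Psi$, so the conditional fading marks stay i.i.d., is a point the paper leaves implicit. One small inaccuracy is your parenthetical about $m<1$: if $\lfloor\beta\No\Ns\rfloor=0$ then $m=-1$, and the right-hand side is negative (hence false) rather than vacuous, so $n_0$ should be taken large enough that $\beta n_0^2\ge 2$ --- a caveat the paper's statement shares.
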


\begin{IEEEproof}
Fix a snapshot $\Psi$ (equivalently, fix the phase state). By Lemma~\ref{lem:block},
for all $\min\{\No,\Ns\}\ge n_0$ there exist at least $m$ visible interferers with
distances $\le d_2$ and gains $\ge g_r$. Hence
\[
I\ge \sum_{k=1}^{m} p\,g_r\,H_k\,d_2^{-\alpha}
=: c_0 \sum_{k=1}^{m} H_k,
\qquad c_0:=p\,g_r\,d_2^{-\alpha}.
\]
Let $Z:=\sum_{k=1}^{m}H_k$. Then $\E[Z]=m$ and $\Var(Z)=m\Var(H)$.
Define $b:=\frac{c_0 m}{2}$. Then
\begin{align*}
    \Prob(I\le b)&\le \Prob\!\Big(Z\le \frac{m}{2}\Big)
\le \Prob\!\Big(|Z-m|\ge \frac{m}{2}\Big)
\\&\le \frac{4\Var(Z)}{m^2}
= \frac{4\Var(H)}{m},
\end{align*}
by Chebyshev's inequality.
Next, using $\mathrm{SINR}=\frac{S}{I+\sigma^2}$ and the union bound,
\[
\Prob(\mathrm{SINR}>\tau)
=
\Prob\big(S>\tau(I+\sigma^2)\big)
\le \Prob(I\le b)+\Prob\big(S>\tau b\big).
\]
Since always $D(l_u)\ge r-e$ and $G(\cdot)\le g_tg_r$, we have
$S=pG_\star H_0 D^{-\alpha}\le p\,g_tg_r\,H_0\,(r-e)^{-\alpha}$. Hence
\[
\E[S]\le p\,g_tg_r\,(r-e)^{-\alpha}\E[H_0]=p\,g_tg_r\,(r-e)^{-\alpha}.
\]
Therefore, by Markov,
\[
\Prob(S>\tau b)\le \frac{\E[S]}{\tau b}
=
\frac{p\,g_tg_r\,(r-e)^{-\alpha}}{\tau\cdot \frac{p g_r d_2^{-\alpha} m}{2}}
=
\frac{2g_t}{\tau}\Big(\frac{d_2}{r-e}\Big)^{\alpha}\frac{1}{m}.
\]
Combining the two bounds yields the claim.
\end{IEEEproof}

Next, we translate the same deterministic-block mechanism into an explicit
rate--density bound for the ergodic spectral efficiency.

%===============================================================================
%==================== Theorem B: Ergodic rate--density converse ====================

\begin{theorem}[Ergodic rate--density converse]
\label{thm:rate_density}
Assume the setup of Theorem~\ref{thm:densification}, with i.i.d.\ nonnegative
fading of unit mean and finite variance $\Var(H)<\infty$.
Let $(d_2,\beta,n_0)$ be the constants from Lemma~\ref{lem:block}, set
$A:=\big(\frac{d_2}{r-e}\big)^{\alpha}$ and $m:=\lfloor\beta\No\Ns\rfloor-1$.
Then for all $\min\{\No,\Ns\}\ge n_0$,
\begin{align*}
&C_{\mathrm{erg}}(\No,\Ns;l_u) 
\\
&\le \frac{1}{\ln 2}\, \E[\mathrm{SINR}]
\\ &\le
\frac{1}{\ln 2}\left(
\frac{2g_t A}{m}
+
\frac{4\Var(H)}{m}\cdot
\frac{p\,g_tg_r\,(r-e)^{-\alpha}}{\sigma^2}
\right).
\end{align*}
In particular, if $\No\Ns\ge 4/\beta$, then $m\ge (\beta/2)\No\Ns$ and thus
$C_{\mathrm{erg}}(\No,\Ns;l_u)=O(1/N)$ as $N=\No\Ns\to\infty$.
\end{theorem}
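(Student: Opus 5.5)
The plan is to reuse the deterministic interference lower bound from the proof of Theorem~\ref{thm:exp_cov} and split the expectation of the SINR on the event where interference is large. The first inequality is elementary. We have $\log_2(1+x)\le x/\ln 2$ for $x\ge0$, so $C_{\mathrm{erg}}\le \E[\mathrm{SINR}]/\ln 2$. It then suffices to bound $\E[\mathrm{SINR}]$.

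Fix any phase state with $\min\{\No,\Ns\}\ge n_0$. By Lemma~\ref{lem:block}, after discarding possibly the serving satellite, there are $m$ visible interferers with distance at most $d_2$ and gain at least $g_r$. Hence $I\ge c_0 Z$, with $c_0=p g_r d_2^{-\alpha}$ and $Z=\sum_{k=1}^m H_k$. Set $b=c_0m/2$ and $E=\{I\le b\}$. Then write
\[
\E[\mathrm{SINR}]=\E\!\Big[\frac{S}{I+\sigma^2}\mathbf 1_{E^c}\Big]+\E\!\Big[\frac{S}{I+\sigma^2}\mathbf 1_{E}\Big]
\le \frac{\E[S]}{b}+\frac{1}{\sigma^2}\E[S\,\mathbf 1_E].
\]
The first term uses $I+\sigma^2>b$ on $E^c$, together with $\E[S]\le p g_tg_r(r-e)^{-\alpha}$. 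This gives exactly $2g_tA/m$, by the same computation as the Markov step in Theorem~\ref{thm:exp_cov}.

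For the second term I would avoid Cauchy--Schwarz, since it only gives a $1/\sqrt m$ rate. Instead I use independence. Bound $S\le p g_tg_r(r-e)^{-\alpha}H_0$. The event $E$ is contained in $\{Z\le m/2\}$, which depends only on the interferer fades $H_1,\dots,H_m$. These are independent of $H_0$ and of the snapshot. Hence
\[
\E[S\mathbf 1_E]\le p g_tg_r(r-e)^{-\alpha}\,\E[H_0]\,\Prob(Z\le m/2)\le p g_tg_r(r-e)^{-\alpha}\frac{4\Var(H)}{m},
\]
using Chebyshev as in Theorem~\ref{thm:exp_cov}. Dividing by $\sigma^2$ gives the second term of the claimed bound. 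Everything holds conditionally on each phase state, so it also holds after averaging over $Q$.

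The main obstacle is justifying the independence step. The identity of the serving satellite, and which satellites are the $m$ annulus interferers, depend on the phase state. Within a fixed snapshot, however, these indices are determined. The fading marks are i.i.d.\ and independent of $\Psi$, so conditioning on $\Psi$ makes $H_0$ independent of $(H_k)_{k\le m}$. I would state this conditioning explicitly. The final claim follows because $N\ge4/\beta$ gives $\lfloor\beta N\rfloor-1\ge\beta N-2\ge \beta N/2$, so $m\ge\beta N/2$ and both terms are $O(1/N)$.
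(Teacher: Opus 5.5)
Your proposal is correct and follows essentially the same route as the paper. Both split $\E[\mathrm{SINR}]$ on the low-interference event with $b=c_0m/2$, bound the complementary part by $\E[S]/b=2g_tA/m$, and control $\E[S\,\mathbf 1_{\{I<b\}}]$ by conditioning on $\Psi$, using independence of the serving fade from the interferer fades, and applying the Chebyshev bound $4\Var(H)/m$. Your explicit reduction to the event $\{Z\le m/2\}$, which involves only the $m$ annulus fades, is a slightly cleaner phrasing of the paper's independence step.
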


\begin{IEEEproof}
Let $b:=\frac{c_0 m}{2}$ with $c_0:=p g_r d_2^{-\alpha}$ as in the proof of
Theorem~\ref{thm:exp_cov}. Then
\begin{align*}
\E[\mathrm{SINR}]
=
\E\!\left[\frac{S}{I+\sigma^2}\right]
&\le
\E\!\left[\frac{S}{b+\sigma^2}\right]
+
\E\!\left[\frac{S}{\sigma^2}\mathbf{1}_{\{I<b\}}\right]\\
&\le
\frac{\E[S]}{b}
+
\frac{1}{\sigma^2}\E\!\left[S\,\mathbf{1}_{\{I<b\}}\right].
\end{align*}
As in Theorem~\ref{thm:exp_cov}, $\E[S]\le p g_tg_r (r-e)^{-\alpha}$ and
\[
\Prob(I<b)\le \frac{4\Var(H)}{m}.
\]
Moreover, conditioning on $\Psi$, the event $\{I<b\}$ depends only on the
interferer fadings, which are independent of the serving-link fading; and we
have the uniform bound $\E[S\mid \Psi]\le p g_tg_r (r-e)^{-\alpha}$. Hence
\begin{align*}
\E\!\left[S\,\mathbf{1}_{\{I<b\}}\right]
=
\E\!\left[\E[S\mid\Psi]\Prob(I<b\mid\Psi)\right]
\\ \le
p g_tg_r (r-e)^{-\alpha}\cdot \frac{4\Var(H)}{m}.
\end{align*}

Finally,
\[
\frac{\E[S]}{b}
\le
\frac{p g_tg_r (r-e)^{-\alpha}}{\frac{p g_r d_2^{-\alpha} m}{2}}
=
\frac{2g_t}{m}\Big(\frac{d_2}{r-e}\Big)^{\alpha}
=
\frac{2g_t A}{m}.
\]
Combining these bounds gives the stated inequality, and then
$C_{\mathrm{erg}}\le \E[\mathrm{SINR}]/\ln 2$ yields the theorem.
\end{IEEEproof}

\section{Necessary Reuse/Thinning Scaling Under Densification}
\label{sec:reuse_scaling}

We next study independent per-resource-block thinning as a tractable abstraction
of reuse and activity control. Within this model, keeping coverage bounded away
from zero requires $q\No\Ns=O(1)$, equivalently an effective reuse factor
$F=1/q$ that scales at least linearly with $\No\Ns$.

\begin{theorem}[Necessary reuse/thinning scaling to avoid collapse (general fading)]
\label{thm:reuse_scaling}
Assume the setup of Theorem~\ref{thm:densification}, with i.i.d.\ nonnegative
fading of unit mean and finite second moment $\E[H^2]<\infty$.
Keep $(d_2,\beta,n_0)$ from Lemma~\ref{lem:block}. Suppose each potential
interferer transmits on the considered resource block independently with
probability $q\in(0,1]$ (independent thinning), and denote the resulting coverage
by $P_{\mathrm{cov}}^{(q)}$.

Let $m:=\lfloor\beta\No\Ns\rfloor-1$ and $A:=\big(\frac{d_2}{r-e}\big)^{\alpha}$.
Then for all $\min\{\No,\Ns\}\ge n_0$,
\[
P_{\mathrm{cov}}^{(q)}(\No,\Ns;l_u,\tau)
\le
\frac{4\,\E[H^2]}{q\,m}
+
\frac{2g_t}{\tau}\,\frac{A}{q\,m}.
\]
Consequently, if $q\,\No\Ns\to\infty$ as $\No\Ns\to\infty$, then
$P_{\mathrm{cov}}^{(q)}(\No,\Ns;l_u,\tau)\to 0$. In particular, to keep coverage
bounded away from $0$ as the constellation densifies, it is necessary that
\[
q\,\No\Ns=O(1),
\qquad\text{equivalently}\qquad
F:=\frac{1}{q}=\Omega(\No\Ns).
\]
\end{theorem}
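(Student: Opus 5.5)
The plan is to repeat the proof of Theorem~\ref{thm:exp_cov}, now working with the thinned interference. Fix a phase state. By Lemma~\ref{lem:block}, for $\min\{\No,\Ns\}\ge n_0$ there are at least $m$ visible interferers with $d_k\le d_2$ and $G(d_k)\ge g_r$. Let $A_k\in\{0,1\}$ be their i.i.d.\ Bernoulli$(q)$ activity indicators, independent of the fadings and of $\Psi$. Discarding all other interferers gives the deterministic lower bound
\[
I\ge c_0 Z_q,\qquad Z_q:=\sum_{k=1}^m A_kH_k,\qquad c_0:=p\,g_r\,d_2^{-\alpha}.
\]
The summands $A_kH_k$ are i.i.d.\ with mean $q$ and second moment $q\E[H^2]$. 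Hence $\E[Z_q]=qm$ and
\[
\Var(Z_q)=m\big(q\E[H^2]-q^2\big)\le qm\,\E[H^2].
\]
The serving link is not thinned, so $S$ keeps the uniform bound $\E[S]\le p\,g_tg_r(r-e)^{-\alpha}$.

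Next, set the threshold $b_q:=c_0qm/2$ and apply the union bound as before:
\[
P^{(q)}_{\mathrm{cov}}\le \Prob(I\le b_q)+\Prob(S>\tau b_q).
\]
For the first term, Chebyshev gives
\[
\Prob(Z_q\le qm/2)\le \frac{4\Var(Z_q)}{q^2m^2}\le \frac{4\E[H^2]}{qm}.
\]
This is where the variance bound $\Var(A_kH_k)\le q\E[H^2]$ matters: bounding by the raw variance of $H$ would not yield the $1/q$ scaling, and that is why the statement carries $\E[H^2]$ instead of $\Var(H)$. For the second term, Markov's inequality gives
\[
\Prob(S>\tau b_q)\le \frac{p\,g_tg_r(r-e)^{-\alpha}}{\tau\,p\,g_r\,d_2^{-\alpha}\,qm/2}=\frac{2g_tA}{\tau qm}.
\]
Summing the two terms gives the stated bound. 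All constants are uniform over phase states, so the bound also holds after averaging over $Q$.

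For the consequence, note that $m\ge(\beta/2)\No\Ns$ once $\No\Ns\ge 4/\beta$. Then
\[
P^{(q)}_{\mathrm{cov}}\le \frac{K_{\mathrm{thin}}(\tau)}{q\No\Ns},
\]
which tends to $0$ whenever $q\No\Ns\to\infty$. Taking the contrapositive along subsequences: if coverage stays bounded below by some $\delta>0$, then $q\No\Ns\le K_{\mathrm{thin}}(\tau)/\delta$, so $q\No\Ns=O(1)$ and $F=1/q=\Omega(\No\Ns)$.

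The only delicate point is the variance computation for the thinned sum, which yields $1/(qm)$ rather than $1/m$. Everything else is a direct transcription of the proof of Theorem~\ref{thm:exp_cov}.
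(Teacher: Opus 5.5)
Your proposal is correct and follows the paper's proof essentially step for step. The ingredients are the annulus-block lower bound $I^{(q)}\ge c_0\sum_{k=1}^m \delta_kH_k$ and Chebyshev with $\Var(\delta_kH_k)\le q\E[H^2]$ at the threshold $b=c_0qm/2$, followed by Markov on $S$ via $\E[S]\le p\,g_tg_r(r-e)^{-\alpha}$ and the union bound; your derivation of the scaling consequence through $m\ge(\beta/2)\No\Ns$ also matches Corollary~\ref{cor:thin_limit}.
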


\begin{IEEEproof}
Fix a snapshot $\Psi$. By Lemma~\ref{lem:block}, there exist at least $m$ visible
potential interferers with distances $\le d_2$ and gains $\ge g_r$.
After thinning, their aggregate contribution satisfies
\[
I^{(q)} \ge c_0 \sum_{k=1}^{m} \delta_k H_k,
\qquad c_0:=p g_r d_2^{-\alpha},
\]
where $\delta_k\sim\mathrm{Bernoulli}(q)$ i.i.d., independent of $(H_k)$.
Let $Y_k:=\delta_k H_k$. Then $\E[Y_k]=q\E[H]=q$ and
$\Var(Y_k)\le \E[Y_k^2]=q\E[H^2]$.
Hence, for $W:=\sum_{k=1}^{m}Y_k$ we have $\E[W]=qm$ and
$\Var(W)\le m q\E[H^2]$. Define $b:=\frac{c_0 q m}{2}$.
By Chebyshev,
\[
\Prob(I^{(q)}\le b)\le \Prob\!\Big(W\le \frac{qm}{2}\Big)
\le \frac{4\Var(W)}{(qm)^2}
\le \frac{4\E[H^2]}{q m}.
\]
As before,
\[
\Prob(\mathrm{SINR}>\tau)\le \Prob(I^{(q)}\le b)+\Prob(S>\tau b).
\]
Using $\E[S]\le p g_tg_r (r-e)^{-\alpha}$ and Markov,
\begin{align*}
\Prob(S>\tau b)\le \frac{\E[S]}{\tau b}
&=
\frac{p g_tg_r (r-e)^{-\alpha}}{\tau \cdot \frac{p g_r d_2^{-\alpha} q m}{2}} \\
&=
\frac{2g_t}{\tau}\Big(\frac{d_2}{r-e}\Big)^\alpha\frac{1}{q m}
=
\frac{2g_t}{\tau}\frac{A}{q m}.
\end{align*}

Combining yields the stated bound. The scaling consequence follows immediately.
\end{IEEEproof}
\begin{remark}[Scope of the thinning scaling law]
\label{rem:deterministic_reuse}

The condition $qN=O(1)$ is a necessary scaling law for the independent-thinning
model and is not claimed to be a universal converse for all deterministic
coordination schemes. Deterministic frequency reuse, coordinated beam hopping,
or interference-aware scheduling may exclude nearby co-channel satellites more
efficiently than random thinning. The broader design implication of
Lemma~\ref{lem:block} is that the number, or aggregate received weight, of
simultaneously active co-channel satellites in the guaranteed visibility
annulus must remain controlled as $N$ grows. Deriving a tight scaling law for
deterministic reuse requires specifying the beam footprints, reuse pattern, and
coordination constraints, and is left for future work.
\end{remark}
\begin{corollary}[Explicit densification limit under thinning]
\label{cor:thin_limit}
Assume the setup of Theorem~\ref{thm:reuse_scaling} (independent thinning with activity
probability $q$), and assume i.i.d.\ nonnegative fading of unit mean with finite second
moment $\E[H^2]<\infty$. Fix $\tau>0$, and keep $(d_2,\beta,n_0)$ from Lemma~\ref{lem:block}.
Define
\[
K_{\mathrm{thin}}(\tau)
:=
\frac{8\,\E[H^2]}{\beta}
\;+\;
\frac{4g_t}{\beta\,\tau}\Big(\frac{d_2}{r-e}\Big)^{\alpha}.
\]
Then, for all $\min\{\No,\Ns\}\ge n_0$ such that $\No\Ns\ge 4/\beta$,
\[
P_{\mathrm{cov}}^{(q)}(\No,\Ns;l_u,\tau)
\le
\frac{K_{\mathrm{thin}}(\tau)}{q\,\No\Ns}.
\]
Consequently, for any target level $\varepsilon\in(0,1)$,
\[
q\,\No\Ns \;\ge\; \frac{K_{\mathrm{thin}}(\tau)}{\varepsilon}
\quad\text{Implies}\quad
P_{\mathrm{cov}}^{(q)}(\No,\Ns;l_u,\tau)\le \varepsilon.
\]
Equivalently, any design requirement
$P_{\mathrm{cov}}^{(q)}(\No,\Ns;l_u,\tau)\ge 1-\varepsilon$
forces the necessary condition
\[
q\,\No\Ns \;\le\; \frac{K_{\mathrm{thin}}(\tau)}{1-\varepsilon}.
\]
In particular, if $q\,\No\Ns\to\infty$ as $\No\Ns\to\infty$, then
$P_{\mathrm{cov}}^{(q)}(\No,\Ns;l_u,\tau)\to 0$.
Hence, to keep $P_{\mathrm{cov}}^{(q)}$ bounded away from $0$ uniformly under densification,
it is necessary that $q\,\No\Ns=O(1)$, i.e., $F=1/q=\Omega(\No\Ns)$.
\end{corollary}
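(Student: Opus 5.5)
The corollary follows directly from Theorem~\ref{thm:reuse_scaling}; no new probabilistic argument is needed. The plan is to start from the finite-$N$ bound
\[
P_{\mathrm{cov}}^{(q)}\le \frac{4\E[H^2]}{qm}+\frac{2g_t}{\tau}\frac{A}{qm},\qquad m=\lfloor\beta\No\Ns\rfloor-1,
\]
valid for $\min\{\No,\Ns\}\ge n_0$, and convert $1/m$ into $1/(\No\Ns)$. Under the extra hypothesis $\No\Ns\ge 4/\beta$ we have $\beta\No\Ns\ge 4$, so
\[
m\ge \beta\No\Ns-2\ge \beta\No\Ns-\tfrac{\beta}{2}\No\Ns=\tfrac{\beta}{2}\No\Ns .
\]
This is the same step used in Theorems~\ref{thm:exp_cov} and~\ref{thm:rate_density}. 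Substituting $1/(qm)\le 2/(q\beta\No\Ns)$ into both terms gives $\frac{8\E[H^2]}{\beta}$ and $\frac{4g_tA}{\beta\tau}$, whose sum is $K_{\mathrm{thin}}(\tau)$. This yields $P_{\mathrm{cov}}^{(q)}\le K_{\mathrm{thin}}(\tau)/(q\No\Ns)$.

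The remaining statements are rearrangements of this inequality. If $q\No\Ns\ge K_{\mathrm{thin}}/\varepsilon$, the bound is at most $\varepsilon$. For the necessity claim, suppose $P_{\mathrm{cov}}^{(q)}\ge 1-\varepsilon$. Then $1-\varepsilon\le K_{\mathrm{thin}}/(q\No\Ns)$, and since $1-\varepsilon>0$ this rearranges to $q\No\Ns\le K_{\mathrm{thin}}/(1-\varepsilon)$. If $q\No\Ns\to\infty$, the right-hand side tends to zero. Finally, if $P_{\mathrm{cov}}^{(q)}\ge c>0$ uniformly along the sequence, the previous inequality with $1-\varepsilon=c$ gives $q\No\Ns\le K_{\mathrm{thin}}/c$, so $q\No\Ns=O(1)$ and $F=1/q=\Omega(\No\Ns)$.

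No step is genuinely hard. The only point needing care is the inequality $m\ge(\beta/2)\No\Ns$, which must hold together with $m\ge 1$ so that the bound in Theorem~\ref{thm:reuse_scaling} is meaningful; the condition $\No\Ns\ge 4/\beta$ guarantees $m\ge 2$. A secondary point is that all statements should be read along sequences with $\min\{\No,\Ns\}\ge n_0$, as in Theorem~\ref{thm:densification}.
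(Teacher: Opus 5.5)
Your proposal is correct and follows the paper's proof: both invoke Theorem~\ref{thm:reuse_scaling}, use $m\ge(\beta/2)\No\Ns$ under $\No\Ns\ge 4/\beta$ to get $P_{\mathrm{cov}}^{(q)}\le K_{\mathrm{thin}}(\tau)/(q\No\Ns)$, and then rearrange this single inequality to obtain the sufficiency and necessity statements. You spell out steps the paper leaves implicit, namely $m\ge\beta\No\Ns-2\ge(\beta/2)\No\Ns$ and the final deduction $q\No\Ns=O(1)$.
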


\begin{IEEEproof}
By Theorem~\ref{thm:reuse_scaling}, for all $\min\{\No,\Ns\}\ge n_0$, with $\No\Ns\ge 4/\beta$,
\[
P_{\mathrm{cov}}^{(q)}(\No,\Ns;l_u,\tau)
\le
\frac{K_{\mathrm{thin}}(\tau)}{q\,\No\Ns}.
\]
The first implication follows by substituting
$q\No\Ns \ge K_{\mathrm{thin}}(\tau)/\varepsilon$.
For the necessary condition, if $P_{\mathrm{cov}}^{(q)}\ge 1-\varepsilon$, then necessarily
$1-\varepsilon \le K_{\mathrm{thin}}(\tau)/(q\No\Ns)$, i.e.,
$q\No\Ns \le K_{\mathrm{thin}}(\tau)/(1-\varepsilon)$.
\end{IEEEproof}
\section{Discussion}
\label{sec:discussion}

The main design implication of Theorems~\ref{thm:densification}--\ref{thm:reuse_scaling}
is that, in Walker LEO downlinks, densification alone cannot sustain nontrivial
link-level performance under full reuse. The core reason is the phase-state--uniform
annulus-block mechanism of Lemma~\ref{lem:block}, which guarantees that, for all
sufficiently large constellations and for every phase state, a positive-density
set of visible interferers remains confined to a fixed annulus away from the
horizon. This converts densification into explicit nonasymptotic collapse bounds,
rather than only asymptotic or typical-state statements.

Accordingly, the collapse results concern the link-level SINR, coverage, and
spectral efficiency on the considered shared resource block; they do not imply
that the aggregate capacity of the entire multi-resource network vanishes.
Association, gateway-availability, and mobility-aware handover policies may
change the selected serving satellite and the finite-$N$ operating point,
whereas beam scheduling and reuse can alter the asymptotic conclusion if they
prevent the effective co-channel interference population from growing with the
constellation size.

From a system-design viewpoint, the results provide a direct reuse dimensioning
rule. The quantitative bounds in Theorems~\ref{thm:exp_cov} and
\ref{thm:rate_density} show that both coverage and ergodic spectral efficiency
decay at least on the order of $1/N$ under full reuse, while
Theorem~\ref{thm:reuse_scaling} shows that preventing coverage collapse under
thinning requires
\[
qN=O(1),
\qquad\text{equivalently}\qquad
F=\frac{1}{q}=\Omega(N),
\]
with $N=\No\Ns$. Thus, increasing the constellation size without commensurate
reuse planning, scheduling, beam hopping, or coordinated muting inevitably drives
the system toward vanishing SINR. The practical implication is therefore not that an operational constellation
should activate every satellite on every resource block. Rather, the result
identifies the quantity that scheduling, reuse, and beam coordination must
control. If the number, or aggregate received weight, of co-channel active
satellites in the guaranteed visibility annulus grows without bound, the
per-resource-block performance collapses. Conversely, practical interference
management can prevent this mechanism by keeping that effective co-channel
population bounded. The scaling law $qN=O(1)$ makes this requirement explicit
for the independent-activity model.

In this sense, the results identify a  structural limitation of dense Walker
deployments on a shared resource block: the dominant bottleneck is not the lack
of a nearby serving satellite, but the growth of the co-channel visible
interference population. The numerical results in Section~\ref{sec:numerical}
illustrate this mechanism at finite $N$ and confirm that enforcing the scaling
$qN=\Theta(1)$ stabilizes performance, whereas fixed reuse only delays the
collapse. 
For finite-regime dimensioning, an operator may use the numerical coverage or
rate curves to identify the pre-collapse knee and select the smallest
constellation size satisfying the prescribed service target. Theorem~\ref{thm:exist_opt_sinr}
further guarantees that the mean-SINR search can be restricted to a finite set;
developing a tighter operator-specific dimensioning rule that incorporates
cost, traffic load, scheduling, and gateway constraints is left for future
work.
\section{Existence of an optimal densification level for mean SINR}
\label{sec:opt_mean_sinr}

This section formalizes an ``optimal constellation size'' question for the
invariant-measure averaged \emph{mean SINR}.  While Section~\ref{sec:main_results} establishes
densification converses (coverage/rate collapse), one may also ask whether the
objective $J_{\mathrm{SINR}}(\No,\Ns;l_u):=\E[\mathrm{SINR}]$ admits an optimizer
over the Walker design parameters $(\No,\Ns)$.

A closed-form optimizer is typically out of reach because, under nearest-visible
association, both the identity of the serving satellite $\star$ and the visible
set $\mathcal V(\Psi)$ vary with the phase state $(\bar\theta,\bar\omega)$, inducing a
highly nontrivial partition of the phase cell into regions where $(\star,\mathcal V)$
remain constant (a Voronoi-like structure in the phase-state space). Nevertheless, the
annulus-block mechanism of Lemma~\ref{lem:block} implies that, in the asymptotic
regime of interest (large enough $\min\{\No,\Ns\}$), the mean SINR necessarily
decays as $O(1/(\No\Ns))$. This guarantees that the mean-SINR optimization problem
reduces to a \emph{finite} search, hence an optimizer exists.

%===============================================================================
\begin{theorem}[Existence of a mean-SINR maximizer and finite search region]\label{thm:exist_opt_sinr}
Assume the setup of Theorem~\ref{thm:densification}. In addition, assume i.i.d.\
fading $H$ across links, independent of $\Psi$, such that
\[
H\ge 0,\qquad \E[H]=1,\qquad \E[H^2]<\infty,
\]
and that the fading law is not too concentrated near $0$, in the sense that there exist
constants $c_{\mathrm{sb}}>0$ and $\kappa>0$ such that
\begin{equation}\label{eq:small_ball_H}
\Prob(H\le x)\le c_{\mathrm{sb}}\,x^{\kappa},\qquad \text{for all }x\in(0,1].
\end{equation}
(Condition \eqref{eq:small_ball_H} holds for all standard continuous fading models
(e.g., Rayleigh/Exponential, Nakagami-$m$, Rician power fading, lognormal shadowing with a floor, etc.).)

Define the mean-SINR objective
\[
J_{\mathrm{SINR}}(\No,\Ns;l_u):=\E[\mathrm{SINR}]
=\E\!\left[\frac{S}{I+\sigma^2}\right].
\]
We consider the associated optimization problem of maximizing
$J_{\mathrm{SINR}}(\No,\Ns;l_u)$ over the admissible Walker design parameters
$(\No,\Ns)$ for a fixed latitude $l_u$.
Let $(d_2,\beta,n_0)$ be the constants from Lemma~\ref{lem:block}.

\smallskip
\noindent\textbf{(i) Mean-SINR decay.}
There exist finite constants $M<\infty$ and $K_{\mathrm{SINR}}<\infty$
(depending on $(\beta,d_2)$ and the fading law through $\E[H^2],c_{\mathrm{sb}},\kappa$)
such that, for all $\min\{\No,\Ns\}\ge n_0$ with $\No\Ns\ge M$,
\[
J_{\mathrm{SINR}}(\No,\Ns;l_u)\le \frac{K_{\mathrm{SINR}}}{\No\Ns}.
\]
In particular, there exists a finite
$M_0=M_0(\beta,c_{\mathrm{sb}},\kappa,\E[H^2])$ such that, for all
$\No\Ns\ge M_0$,
\begin{equation}\label{eq:Ksinr_choice}
\begin{aligned}
J_{\mathrm{SINR}}(\No,\Ns;l_u)
&\le
\frac{2g_t}{\beta}\Big(\frac{d_2}{r-e}\Big)^{\alpha}\,
\frac{C_H}{\No\Ns},\\
\text{with}\qquad
C_H&:=4+64\,\Var(H),
\end{aligned}
\end{equation}
where $\Var(H)=\E[H^2]-1$, and where $\beta$ and $d_2$ are the constants
introduced in Lemma~\ref{lem:block}.

\smallskip
\noindent\textbf{(ii) Existence of an optimizer (finite search).}
Let
\[
J_{\mathrm{ref}}:=J_{\mathrm{SINR}}(n_0,n_0;l_u)>0,
\]
and define
\[
N_{\max}:=\max\!\left\{M,\;\Big\lceil \frac{K_{\mathrm{SINR}}}{J_{\mathrm{ref}}}\Big\rceil\right\}.
\]
Then any maximizer, over the admissible integer design variables $(\No,\Ns)$
with fixed latitude $l_u$, of $J_{\mathrm{SINR}}(\No,\Ns;l_u)$ over the design set
\[
\mathcal D:=\big\{(\No,\Ns)\in\mathbb{N}^2:\ \min\{\No,\Ns\}\ge n_0\big\}
\]
must satisfy $\No\Ns\le N_{\max}$. Consequently, since the set
\[
\{(\No,\Ns)\in\mathcal D:\ \No\Ns\le N_{\max}\}
\]
is finite, there exists at least one pair $(\No^\star,\Ns^\star)\in\mathcal D$
such that
\[
J_{\mathrm{SINR}}(\No^\star,\Ns^\star;l_u)
=
\max_{(\No,\Ns)\in\mathcal D} J_{\mathrm{SINR}}(\No,\Ns;l_u).
\]
\end{theorem}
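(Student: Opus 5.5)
The plan for part (i) is to redo the argument of Theorem~\ref{thm:rate_density}, but to remove the dependence on $\sigma^2$. Instead of bounding $S/(I+\sigma^2)$ by $S/\sigma^2$ on the bad event, I will control the inverse moment of the interference directly. Fix a snapshot $\Psi$ and take $\min\{\No,\Ns\}\ge n_0$. Lemma~\ref{lem:block} gives $m=\lfloor\beta\No\Ns\rfloor-1$ visible interferers in the annulus, so $I\ge c_0 Z$ with $Z=\sum_{k=1}^m H_k$ and $c_0=p g_r d_2^{-\alpha}$. Conditionally on $\Psi$, $S$ depends only on $H_0$, which is independent of $(H_k)$. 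Hence
\[
J_{\mathrm{SINR}}\le \E\big[\E[S\mid\Psi]\,\E[(c_0Z+\sigma^2)^{-1}\mid\Psi]\big]\le p g_t g_r (r-e)^{-\alpha}\cdot\frac{1}{c_0}\,\E[Z^{-1}\wedge (c_0/\sigma^2)],
\]
and it suffices to show $\E[\min\{Z^{-1},c_0/\sigma^2\}]\le C_H'/m$ for some explicit $C_H'$ once $m$ is large. I will in fact only use $\E[Z^{-1}]$, so $\sigma^2$ drops out.

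To bound $\E[Z^{-1}]$, I split on the size of $Z$ into three regions.
\begin{itemize}
\item On $\{Z\ge m/4\}$ the contribution is at most $4/m$. This is the source of the ``$4$'' in $C_H$.
\item On $\{1\le Z<m/4\}$ the contribution is at most $\Prob(Z\le m/4)$. By Chebyshev this is at most $\Var(Z)/(3m/4)^2=\tfrac{16}{9}\Var(H)/m$.
\item On $\{Z<1\}$ I will write $\E[Z^{-1};Z<1]\le \Prob(Z<1)+\int_1^\infty \Prob(Z<1/t)\,dt$. The first term is again $O(\Var(H)/m)$ by Chebyshev. For the integral I use the small-ball condition \eqref{eq:small_ball_H}: since $Z\ge \max_k H_k$, we get $\Prob(Z\le x)\le (c_{\mathrm{sb}}x^\kappa)^m$ for $x\in(0,1]$.
\end{itemize}

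Set $t_0:=\max\{1,(2c_{\mathrm{sb}})^{1/\kappa}\}$ and split the integral at $t_0$.
\begin{itemize}
\item On $[1,t_0]$ I bound $\Prob(Z<1/t)\le\Prob(Z<1)$, which gives $(t_0-1)\cdot O(\Var(H)/m)$.
\item On $[t_0,\infty)$ I bound $\Prob(Z<1/t)\le 2^{-m}(t_0/t)^{\kappa m}$. Its integral is $O(2^{-m}t_0)$ as soon as $\kappa m>1$, and this is $o(1/m)$.
\end{itemize}
Choosing $M_0$ large enough (depending on $\beta,c_{\mathrm{sb}},\kappa,\E[H^2]$) absorbs the exponentially small and lower-order terms. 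With the crude constants above, this leaves $\E[Z^{-1}]\le C_H/(2m)$ with $C_H=4+64\Var(H)$, since the factor $64$ has ample room. Then $m\ge\beta\No\Ns/2$ and $E[S]/c_0\le g_t (d_2/(r-e))^\alpha$ give \eqref{eq:Ksinr_choice}, with $K_{\mathrm{SINR}}=\frac{2g_t}{\beta}A\,C_H$ and $M=\max\{M_0,4/\beta\}$.

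The main obstacle is the middle range $t\in[1,t_0]$. When $c_{\mathrm{sb}}>1$, the small-ball bound is not yet below one there. I handle it by using only the monotonicity $\Prob(Z<1/t)\le\Prob(Z<1)$ together with Chebyshev, which keeps that contribution of order $1/m$ with a constant depending only on $c_{\mathrm{sb}},\kappa,\Var(H)$. If tracking exactly $64$ proves awkward, I would state the constant as $C_H$ enlarged by a harmless factor absorbed into $M_0$.

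For part (ii), I first check that $J_{\mathrm{ref}}>0$. At $(n_0,n_0)$, Lemma~\ref{lem:block} guarantees that the visible set is nonempty for every phase state, so $S>0$ whenever $H_0>0$. This has positive probability because $\E[H]=1$. Also $I<\infty$ almost surely, so $\E[\mathrm{SINR}]>0$. Now suppose $\No\Ns>N_{\max}$. Then part (i) gives $J_{\mathrm{SINR}}\le K_{\mathrm{SINR}}/(\No\Ns)<J_{\mathrm{ref}}$, so such a design cannot beat $(n_0,n_0)\in\mathcal D$. Hence the supremum over $\mathcal D$ equals the maximum over the finite set $\{\No\Ns\le N_{\max}\}$, which is attained.
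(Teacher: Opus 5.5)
Your architecture matches the paper's: bound $\mathrm{SINR}\le s_0H_\star/(c_0Z)$, reduce to $\E[Z^{-1}]=O(1/m)$, then compare with $J_{\mathrm{ref}}$. Your handling of the bulk is simpler than the paper's dyadic Cantelli slicing, since $Z^{-1}\le 1$ on $\{Z\ge 1\}$ makes one Chebyshev bound enough. Part (ii) and the first claim of (i), i.e.\ some $K_{\mathrm{SINR}}$ depending on $c_{\mathrm{sb}},\kappa,\E[H^2]$, go through.

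\textbf{The gap is the explicit constant in \eqref{eq:Ksinr_choice}.} On $t\in[1,t_0]$ you bound $\Prob(Z<1/t)\le\Prob(Z<1)$ and then apply Chebyshev. The region $\{Z<1\}$ therefore contributes about $t_0\Var(H)/m$, with $t_0=\max\{1,(2c_{\mathrm{sb}})^{1/\kappa}\}$, and your total is roughly $\bigl(4+(t_0+\tfrac{16}{9})\Var(H)\bigr)/m$. This exceeds $C_H/m$ once $t_0$ is above about $62$. For example, $c_{\mathrm{sb}}=1$, $\kappa=1/6$ gives $t_0=64$, and the hypothesis allows any $\kappa>0$. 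So the factor $64$ does not have ample room.

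Your fallback does not repair this. A $(c_{\mathrm{sb}},\kappa)$-dependent factor multiplying $1/m$ cannot be absorbed by enlarging $M_0$; it changes the constant, while the statement lets only $M_0$, not $C_H$, depend on $(c_{\mathrm{sb}},\kappa)$. Separately, your intermediate target $\E[Z^{-1}]\le C_H/(2m)$ is both unnecessary and unreachable from your split. The region $\{Z\ge m/4\}$ alone gives $4/m>(2+32\Var(H))/m$ when $\Var(H)<1/16$, and $C_H/m$ is all that \eqref{eq:Ksinr_choice} requires.

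\textbf{The missing idea} is that the middle range is exponentially small, not just $O(1/m)$. You already use $\{Z\le x\}\subset\bigcap_k\{H_k\le x\}$; apply it below a fixed level too, not only for $t\ge t_0$. Paley--Zygmund gives $\Prob(H>1/2)\ge 1/(4\E[H^2])$, hence $\Prob(Z\le 1/2)\le\bigl(1-1/(4\E[H^2])\bigr)^m$.

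Move your lower cut from $1$ to $1/2$. Then $\E[Z^{-1}\mathbf 1_{\{Z<1/2\}}]\le (2+t_0)\bigl(1-1/(4\E[H^2])\bigr)^m+2^{-m}t_0/(\kappa m-1)$, whose decay is controlled only by $(c_{\mathrm{sb}},\kappa,\E[H^2])$. This is how the paper handles its last term. Also cut the bulk at $m/2$, which contributes $2/m$, and use $Z^{-1}\le 2$ on $\{1/2\le Z<m/2\}$, which contributes at most $8\Var(H)/m$. The small-$Z$ term is at most $2/m$ for $m\ge M_0$, so $\E[Z^{-1}]\le(4+8\Var(H))/m\le C_H/m$.

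Using $\Prob(H<1)<1$ instead would also give exponential decay. Its rate, however, is not controlled by $\E[H^2]$, so $M_0$ would depend on more than the stated parameters.
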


\begin{IEEEproof}
\noindent\textbf{Proof of (i).}
Fix a snapshot $\Psi$ (equivalently, fix the phase state). Since always
$D(l_u)\ge r-e$ and (by the two-level gain model) $G_\star\le g_tg_r$, we have
\begin{equation}\label{eq:S_upper_sinr_opt_gen_corr}
S=p\,G_\star\,H_\star\,D(l_u)^{-\alpha}
\le p\,g_tg_r\,(r-e)^{-\alpha}H_\star
=: s_0\,H_\star,
\end{equation}
where $H_\star$ is an i.i.d.\ copy of $H$.

By Lemma~\ref{lem:block}, for all $\min\{\No,\Ns\}\ge n_0$, there exist at least
\[
m:=\lfloor\beta\No\Ns\rfloor-1
\]
visible interferers with distances $\le d_2$, and gains $\ge g_r$. Therefore,
\begin{equation}\label{eq:I_lower_sinr_opt_gen_corr}
\begin{aligned}
I
&\ge
\sum_{i=1}^{m}p\,g_r\,H_i\,d_2^{-\alpha}
\\
&=c_0\sum_{i=1}^{m}H_i
=:c_0Z,
\\
c_0&:=p\,g_r\,d_2^{-\alpha}.
\end{aligned}
\end{equation}
where $H_i$ are i.i.d.\ copies of $H$ and $Z:=\sum_{i=1}^{m}H_i$.

Using $I+\sigma^2\ge I$ and \eqref{eq:S_upper_sinr_opt_gen_corr}--\eqref{eq:I_lower_sinr_opt_gen_corr},
\[
\mathrm{SINR}=\frac{S}{I+\sigma^2}\le \frac{S}{I}
\le \frac{s_0H_\star}{c_0 Z}.
\]
Taking expectation and using independence of $H_\star$ and $Z$ gives
\[
J_{\mathrm{SINR}}(\No,\Ns;l_u)\le \frac{s_0}{c_0}\,\E[H]\,\E\!\Big[\frac{1}{Z}\Big]
= \frac{s_0}{c_0}\,\E\!\Big[\frac{1}{Z}\Big],
\]
since $\E[H]=1$.

It remains to show $\E[1/Z]=O(1/m)$. Let
$\sigma_H^2:=\Var(H)=\E[H^2]-1$, set
$K:=\lceil\log_2 m\rceil$, and define
\[
\mathcal E_k
:=
\left\{\frac{m}{2^{k+1}}\le Z<\frac{m}{2^k}\right\},
\qquad 1\le k\le K.
\]
Since $m/2^{K+1}\le1/2$, we may write
\begin{align*}
\E\!\Big[\frac{1}{Z}\Big]
&=
\E\!\Big[\frac{1}{Z}\mathbf{1}_{\{Z\ge m/2\}}\Big]
+
\sum_{k=1}^{K}
\E\!\Big[\frac{1}{Z}\mathbf{1}_{\mathcal E_k}\Big]
\\
&\quad+
\E\!\Big[
\frac{1}{Z}\mathbf{1}_{\{Z<m/2^{K+1}\}}
\Big].
\end{align*}

The first term satisfies
\[
\E\!\Big[\frac{1}{Z}\,\mathbf{1}_{\{Z\ge m/2\}}\Big]\le \frac{2}{m}.
\]
For the $k$-th slice, $1/Z\le 2^{k+1}/m$, hence
\[
\E\!\Big[\frac{1}{Z}\mathbf{1}_{\mathcal E_k}\Big]
\le
\frac{2^{k+1}}{m}\,\Prob\!\Big(Z<\frac{m}{2^k}\Big).
\]
Since $\E[Z]=m$ and $\mathrm{Var}(Z)=m\sigma_H^2$, Cantelli's inequality yields, for $k\ge 1$,
\begin{align*}
\Prob\!\Big(Z<\frac{m}{2^k}\Big)
&=
\Prob\!\Big(Z-m\le -m(1-2^{-k})\Big)
\\&\le
\frac{\mathrm{Var}(Z)}{\mathrm{Var}(Z)+m^2(1-2^{-k})^2}
\le
\frac{4\sigma_H^2}{m},
\end{align*}
because $(1-2^{-k})\ge 1/2$. Therefore
\begin{align*}
&\sum_{k=1}^{K}\E\!\Big[\frac{1}{Z}\mathbf{1}_{\mathcal E_k}\Big]
\le
\sum_{k=1}^{K}\frac{2^{k+1}}{m}\cdot\frac{4\sigma_H^2}{m}
=
\frac{32\sigma_H^2}{m^2}\sum_{k=1}^{K}2^{k-1}
\\&\le
\frac{32\sigma_H^2}{m^2}\,2^{K}
\le
\frac{64\sigma_H^2}{m},
\end{align*}
since $2^{K}\le 2m$.

For the last term, let
\[
x_0:=\min\!\left\{\frac12,
(2c_{\mathrm{sb}})^{-1/\kappa}\right\}.
\]
For $z\le x_0$, the small-ball condition gives
\[
\Prob(Z\le z)
\le \Prob(H\le z)^m
\le (c_{\mathrm{sb}}z^\kappa)^m.
\]
For $x_0<z\le1/2$, the Paley--Zygmund inequality yields
\[
\Prob(H>1/2)
\ge \frac{1}{4\E[H^2]},
\]
so $\Prob(H\le z)^m$ decays exponentially in $m$, uniformly on this interval.
Using
\[
\E\!\left[Z^{-1}\mathbf{1}_{\{Z\le a\}}\right]
=
\frac{\Prob(Z\le a)}{a}
+
\int_0^a \frac{\Prob(Z\le z)}{z^2}\,dz,
\]
with $a=m/2^{K+1}\in(1/4,1/2]$, and splitting the integral at $x_0$, shows
that
\[
\E\!\left[
\frac{1}{Z}\mathbf{1}_{\{Z<m/2^{K+1}\}}
\right]
\le \frac{2}{m},
\]
for all sufficiently large $m$. The required threshold depends on
$c_{\mathrm{sb}}$, $\kappa$, and $\E[H^2]$.

Combining the three bounds gives, for all sufficiently large $m$,
\[
\E\!\Big[\frac{1}{Z}\Big]\le \frac{2}{m}+\frac{64\sigma_H^2}{m}+\frac{2}{m}
=
\frac{C_H}{m},
\qquad C_H:=4+64\,\Var(H).
\]
Thus,
\[
J_{\mathrm{SINR}}(\No,\Ns;l_u)\le \frac{s_0}{c_0}\cdot\frac{C_H}{m}
=
g_t\Big(\frac{d_2}{r-e}\Big)^\alpha\cdot \frac{C_H}{m}.
\]
Finally, for $\No\Ns\ge 4/\beta$, we have $m=\lfloor\beta\No\Ns\rfloor-1\ge (\beta/2)\No\Ns$,
hence
\[
J_{\mathrm{SINR}}(\No,\Ns;l_u)
\le
\frac{2g_t}{\beta}\Big(\frac{d_2}{r-e}\Big)^{\alpha}\frac{C_H}{\No\Ns},
\]
which is \eqref{eq:Ksinr_choice} with $K_{\mathrm{SINR}}:=\frac{2g_t}{\beta}\big(\frac{d_2}{r-e}\big)^\alpha C_H$.

\smallskip
\noindent\textbf{Proof of (ii).}
We have $J_{\mathrm{ref}}>0$ because $\sigma^2>0$ and
$\Prob(S>0)>0$ (indeed $H>0$ with positive probability for standard fading laws,
and $D(l_u)<\infty$ always), hence $\mathrm{SINR}>0$ with positive probability.
If $\No\Ns>N_{\max}$, then $\No\Ns\ge M$ and (i) gives
$J_{\mathrm{SINR}}(\No,\Ns;l_u)\le K_{\mathrm{SINR}}/(\No\Ns)<J_{\mathrm{ref}}$,
so no maximizer can satisfy $\No\Ns>N_{\max}$. Hence the maximization reduces to
a finite set, and an optimizer exists.
\end{IEEEproof}
%===============================================================================
% ============================
% Section: Numerical Illustration
% ============================
%===============================================================================
% Numerical Illustration
%===============================================================================

\section{Numerical Illustration}
\label{sec:numerical}

This section provides a minimal Monte-Carlo (MC) illustration of the
densification converses under physically standard LEO downlink assumptions:
Rayleigh fading, path loss, operation under full reuse, and, optionally,
frequency reuse/resource-block thinning as a proxy for PHY-layer reuse
planning.

To assess the robustness of the numerical conclusions, we additionally examine
unit-mean Nakagami-$m$ fading with $m=2$, the finite-density mean-SINR
optimum, and the sensitivity of the full-reuse results to the sidelobe level
of the two-tier antenna model. A complementary continuous-directional-antenna validation, including
TR~38.901-based and Bessel spot-beam patterns, is provided in Supplementary
Material, Section~S-I.

%-------------------------------------------------------------------------------
\subsection{Simulation setup}
\label{subsec:sim_setup}

\paragraph*{Walker snapshot under the stationary phase measure}
For each constellation size $(\No,\Ns)$, with total satellite count
$N=\No\Ns$, we sample independent phase shifts
$(\bar\theta,\bar\omega)$ uniformly over the fundamental cell and place the
satellites according to the deterministic Walker lattice at altitude $h$ and
inclination $\phi$, as in the analytical model.
A user is fixed on the Earth surface at latitude $l_u$.
A satellite is \emph{visible} if it lies above the horizon, and the user
associates with its nearest visible satellite.

All reported operating points are evaluated using a precisely specified
two-stage Monte-Carlo protocol. An independent pilot run of
$T_{\mathrm{pilot}}=6000$ drops is first used solely to identify
low-coverage operating points. A point is classified as low coverage when
the pilot run produces fewer than $100$ realizations satisfying
$\mathrm{SINR}>\tau$. The final reported estimate is then recomputed from
fresh independent samples: exactly $T=6000$ drops for all remaining points
and exactly $T=30000$ drops for the low-coverage points. The pilot samples
are not included in the reported estimates or confidence intervals.
Consequently, every plotted operating point is based on a final independent
sample size of either $6000$ or $30000$ realizations.

\paragraph*{Downlink SINR on a representative resource block}
For a given physical resource block (PRB), the received power from satellite
$i$ is
\[
p\,G(D_i)\,H_i\,D_i^{-\alpha},
\]
where $p$ is the transmit power per PRB and $\alpha$ is the path-loss exponent.

Rayleigh power fading, $H_i\sim\mathrm{Exp}(1)$, is used as the baseline.
We additionally consider unit-mean Nakagami-$m$ power fading,
\[
H_i\sim\mathrm{Gamma}\!\left(m,\frac{1}{m}\right),
\qquad m=2,
\]
where the Gamma distribution is parameterized by its shape and scale.
Both fading models satisfy
$\mathbb{E}[H_i]=1$ and $\mathbb{E}[H_i^2]<\infty$, consistently with the
assumptions of the theoretical results.

The antenna pattern is represented by the two-tier gain model
\[
G(d)=
\begin{cases}
g_rg_t, & d\le d_g,\\
g_r, & d>d_g,
\end{cases}
\]
where the serving link follows the same distance-dependent rule.
The SINR is
\[
\mathrm{SINR}
=
\frac{p\,G(D_0)\,H_0\,D_0^{-\alpha}}
{\displaystyle
\sum_{i\neq0}\mathbf{1}\{A_i=1\}
p\,G(D_i)\,H_i\,D_i^{-\alpha}+\sigma^2},
\]
where index $0$ denotes the serving satellite and
$A_i\sim\mathrm{Bernoulli}(q)$ models the per-PRB activity of interferer $i$.
The parameter $q$ is a reuse/thinning surrogate:
$q=1$ corresponds to full reuse, $q=0.1$ approximates reuse
$F\approx10$, and $q=0.03$ approximates reuse $F\approx33$.

\paragraph*{Performance metrics}
We report:
(i) the coverage probability
$P_{\mathrm{cov}}(\tau)=\mathbb{P}(\mathrm{SINR}>\tau)$;
(ii) the ergodic spectral efficiency
$C_{\mathrm{erg}}=\mathbb{E}[\log_2(1+\mathrm{SINR})]$;
(iii) the mean interference $\mathbb{E}[I]$, normalized by $p$ in the plots;
(iv) the mean serving distance $\mathbb{E}[D_0]$; and

(v) the invariant-measure averaged mean SINR
$J_{\mathrm{SINR}}=\mathbb{E}[\mathrm{SINR}]$.

\paragraph*{Parameters, averaging, and uncertainty quantification}
Table~\ref{tab:sim_params} summarizes the baseline parameters.
For each $(\No,\Ns)$, each activity probability $q$, and each fading model,
we average over the final independent realizations of the phase shifts,
fading variables, and activity indicators.

For the coverage probability, the plotted error bars are exact two-sided
$95\%$ Clopper--Pearson binomial confidence intervals. When no coverage
success is observed, the estimate is not represented as an exact zero.
Instead, we report the one-sided $95\%$ upper confidence bound
\[
P_{\mathrm{cov}}^{\mathrm{up}}
=
1-0.05^{1/T},
\]
where $T$ is the final number of independent drops used at that operating
point. In particular, for a low-coverage point evaluated with $T=30000$
drops and no observed success, this gives
$P_{\mathrm{cov}}^{\mathrm{up}}\simeq 9.99\times10^{-5}$.
For the sample-mean quantities
$C_{\mathrm{erg}}$, $\mathbb{E}[I]$,
$\mathbb{E}[D_0]$, and $J_{\mathrm{SINR}}$, we report two-sided
$95\%$ Student-$t$ confidence intervals computed from the corresponding
final independent samples.

\begin{table}[!tb]
\centering
\caption{Baseline parameters for numerical illustration.}
\label{tab:sim_params}
\footnotesize
\setlength{\tabcolsep}{3pt}
\renewcommand{\arraystretch}{1.12}
\begin{tabular}{@{}p{0.34\columnwidth}p{0.59\columnwidth}@{}}
\toprule
Parameter & Value \\
\midrule

Earth radius
& $e=6371\,\mathrm{km}$ \\

Altitude/orbit radius
& $h=550\,\mathrm{km}$, $r=e+h$ \\

Inclination/user latitude
& $\phi=53^\circ$, $l_u=0^\circ$ \\

Path-loss exponent
& $\alpha=2.5$ \\

Fading power
& Rayleigh and Nakagami-$m$, $m=2$ \\

Antenna gains
& 
Baseline: $(g_r,g_t)=(1,10)$;
sensitivity: $(0.1,100)$ and $(0.01,1000)$;
$d_g=1000\,\mathrm{km}$ \\

SINR threshold
& $\tau=0\,\mathrm{dB}$, unless stated \\

Noise calibration
& $\mathrm{SNR}@d_{\min}=12\,\mathrm{dB}$,
$d_{\min}=r-e$ \\

Pilot MC drops
& 
$T_{\mathrm{pilot}}=6000$ per operating point, used only to identify
low-coverage points \\

Final MC drops
& 
$T=6000$ normally and $T=30000$ when fewer than $100$ coverage
successes occur in the independent pilot run \\

Confidence intervals
& 
Exact $95\%$ Clopper--Pearson intervals for coverage and
$95\%$ Student-$t$ intervals for sample means \\

\bottomrule
\end{tabular}
\end{table}

\paragraph*{Explicit annulus-block constants}

To identify the guaranteed finite-$N$ regime of
Lemma~\ref{lem:block}, we evaluate its geometric construction for
$h=550$~km, $\phi=53^\circ$, and $l_u=0^\circ$.
We choose $(\theta_0,\omega_0)=(\pi,\pi)$, for which
$d_0=r-e=550$~km, and set $\varepsilon_0=250$~km, yielding
$[d_1,d_2]=[300,800]$~km.
A deterministic search gives the nonwrapping rectangle
\[
\begin{aligned}
B
&=[\pi-0.047729,\pi+0.047729]\\&\times[\pi-0.047729,\pi+0.047729],
\end{aligned}
\]
with $\Delta_\theta=\Delta_\omega=0.095459$~rad.
Fine-grid evaluation together with a Lipschitz correction certifies that all
distances over $B$ lie in $[549.59,790.39]$~km and hence strictly inside the
selected annulus.

\begin{table}[!tb]
\centering
\caption{One certified annulus-block construction for the
baseline geometry.}
\label{tab:annulus_constants}
\footnotesize
\setlength{\tabcolsep}{3pt}
\renewcommand{\arraystretch}{1.10}
\begin{tabular}{@{}p{0.41\columnwidth}p{0.52\columnwidth}@{}}
\toprule
Quantity & Value \\
\midrule

Anchor phase
& $(\theta_0,\omega_0)=(\pi,\pi)$ \\

Anchor distance
& $d_0=550\,\mathrm{km}$ \\

Annulus margin
& $\varepsilon_0=250\,\mathrm{km}$ \\

Annulus
& $[d_1,d_2]=[300,800]\,\mathrm{km}$ \\

Rectangle side lengths
& $\Delta_\theta=\Delta_\omega=0.095459\,\mathrm{rad}$ \\

Certified distance range
& $[549.59,790.39]\,\mathrm{km}$ \\

Positive-density constant
& $\beta=5.770461\times10^{-5}$ \\

Grid threshold
& $n_0=132$ \\

Sufficient regime
& $\min\{\No,\Ns\}\ge132$ \\

\bottomrule
\end{tabular}
\end{table}

The rectangle gives
$\beta=\frac14
(\Delta_\theta/(2\pi))(\Delta_\omega/(2\pi))
=5.770461\times10^{-5}$ and
$n_0=\lceil\max\{4\pi/\Delta_\theta,4\pi/\Delta_\omega\}\rceil=132$.
Thus, the annulus-block property is guaranteed whenever
$\min\{\No,\Ns\}\ge132$.
For a balanced constellation, this implies
$N=\No\Ns\ge17\,424$.
This is a conservative sufficient threshold obtained from one admissible,
nonoptimized construction, rather than the empirical constellation size at
which coverage or spectral efficiency begins to decrease.

%-------------------------------------------------------------------------------
\subsection{Densification sweep and reuse policies}
\label{subsec:densif_sweep}

We sweep a standard densification family of Walker sizes
$\{(\No,\Ns)\}$ spanning
$N=\No\Ns$ from $\mathcal{O}(10^2)$ to $\mathcal{O}(10^4)$ satellites.
We compare four PRB activity policies:
\begin{enumerate}
\item \textbf{Full reuse:} $q=1$.
\item \textbf{Fixed thinning/reuse 10:} $q=0.1$.
\item \textbf{Fixed thinning/reuse 33:} $q=0.03$.
\item \textbf{Scaling policy:}
$q(N)=\min(1,c/N)$ with $c=600$, which enforces
$qN=\Theta(1)$ at large $N$.
\end{enumerate}

Each policy is evaluated under both Rayleigh fading and Nakagami-$m$ fading
with $m=2$, while all remaining geometry, propagation, noise, and
Monte-Carlo parameters are kept unchanged.

The coverage and rate plots also include guides proportional to $1/(qN)$ to
illustrate the order predicted by the converse bounds.

%-------------------------------------------------------------------------------
\subsection{Results: collapse under full reuse and the role of thinning}
\label{subsec:sim_results}

\paragraph*{Serving-distance shrinkage and initial gains}
As $N$ increases, the nearest visible satellite becomes closer, so the desired
signal initially strengthens.
Figure~\ref{fig:servdist} shows that $\mathbb{E}[D_0]$ decreases rapidly and
then approaches the geometric minimum.
This creates a transient regime in which coverage and
$C_{\mathrm{erg}}$ improve before interference becomes dominant.

\begin{figure}[!tb]
\centering
\includegraphics[width=0.96\linewidth]{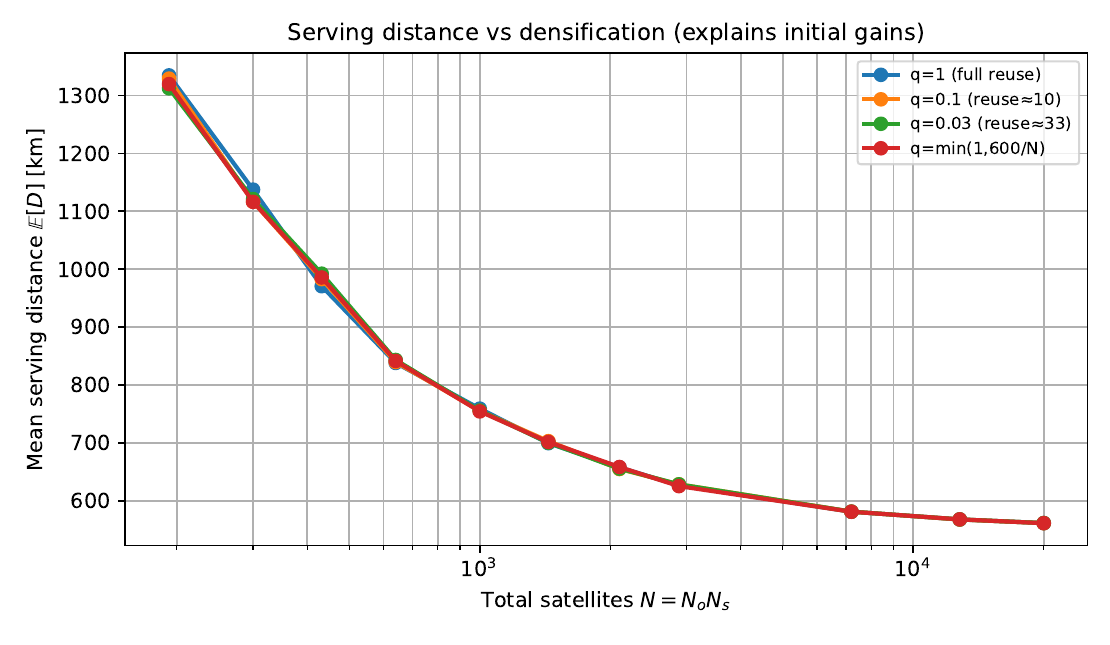}
\caption{Mean serving distance $\mathbb{E}[D_0]$ versus densification.
The distance initially decreases rapidly and then approaches the geometric
minimum, explaining the transient performance gain at moderate $N$.
Error bars are two-sided $95\%$ Student-$t$ confidence intervals computed
from the final independent Monte-Carlo samples.}
\label{fig:servdist}
\end{figure}

\paragraph*{Interference growth and the collapse mechanism}
Figure~\ref{fig:interf_growth} reports the mean interference
$\mathbb{E}[I]$.
Under full reuse, the interference grows approximately linearly with $N$,
consistently with the analytical lower bound.
Fixed thinning reduces the slope but does not eliminate the eventual growth.
In contrast, $q(N)=\min(1,c/N)$ causes $\mathbb{E}[I]$ to flatten at large
$N$, illustrating why keeping $qN=\Theta(1)$ prevents interference blow-up.

Because both fading distributions are normalized to unit mean, the scaling of
the mean interference is unchanged by the fading-law comparison.

\begin{figure}[!tb]
\centering
\includegraphics[width=0.96\linewidth]{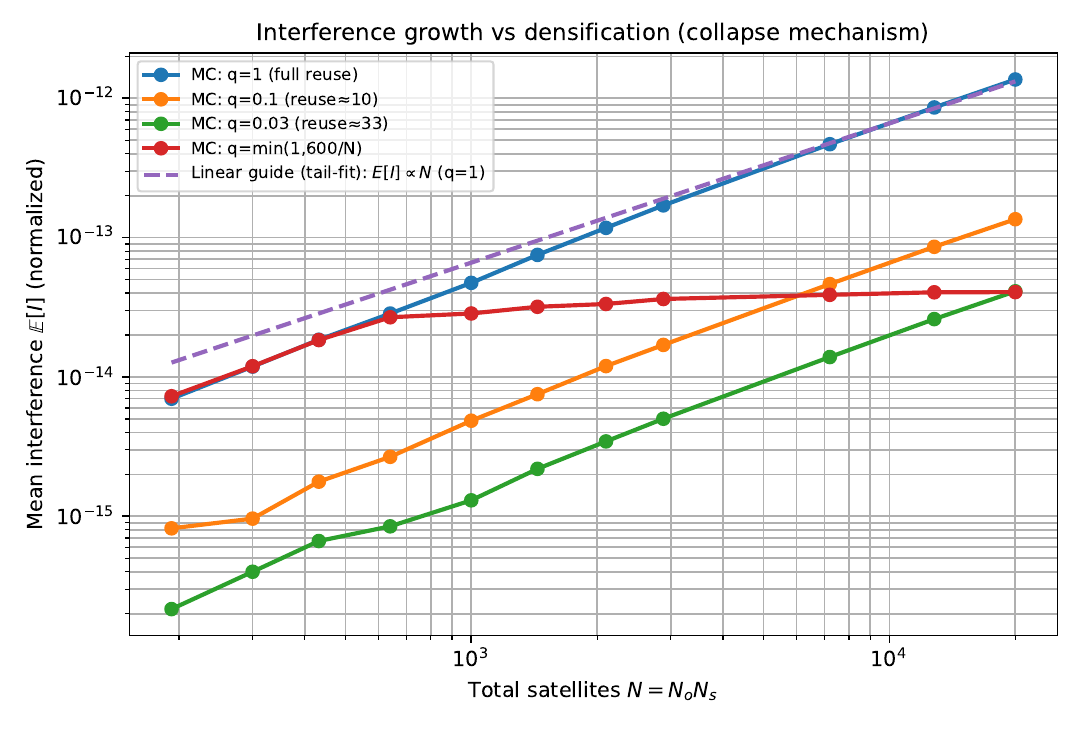}
\caption{Mean interference versus densification.
Full reuse exhibits approximately linear growth, fixed thinning reduces the
growth rate, and $q(N)=\min(1,c/N)$ produces a large-$N$ interference
plateau.
Error bars are two-sided $95\%$ Student-$t$ confidence intervals computed
from the final independent Monte-Carlo samples.}
\label{fig:interf_growth}
\end{figure}

\paragraph*{Coverage behavior}
Figure~\ref{fig:cov_densif} shows the coverage probability on a logarithmic
vertical axis.
Under full reuse, coverage initially increases and then collapses.
Fixed reuse shifts the densification knee toward larger constellation sizes,
while the scaling policy maintains nontrivial coverage throughout the
considered range.

Rayleigh and Nakagami-$m$ fading produce different finite-$N$ coverage values
and slightly different knees, but both preserve the same reuse-dependent
collapse and stabilization behavior.

\begin{figure}[!tb]
\centering
\includegraphics[width=0.96\linewidth]{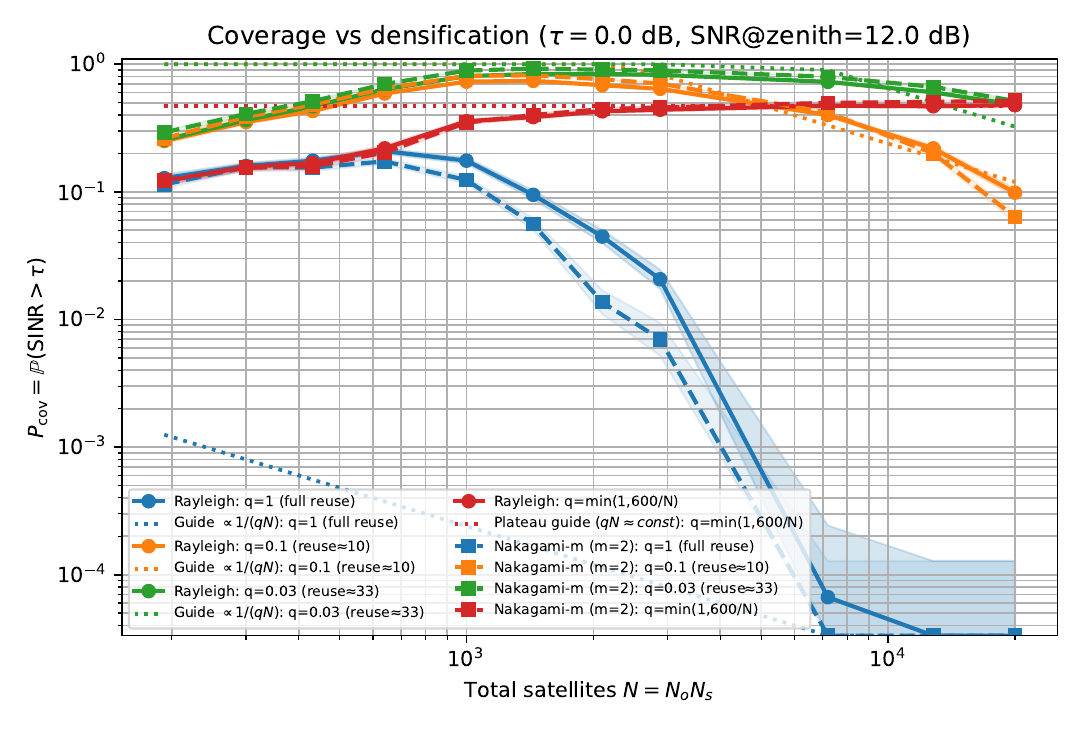}
\caption{Coverage probability versus densification.
Under full reuse, coverage reaches a finite-$N$ peak and then collapses.
Fixed reuse shifts the peak toward larger constellation sizes, whereas
$q(N)=\min(1,c/N)$ stabilizes the large-$N$ behavior.
Rayleigh and Nakagami-$m$ fading preserve the same qualitative trends.
Error bars are exact two-sided $95\%$ Clopper--Pearson confidence intervals.
When no coverage success is observed, an open marker reports the one-sided
$95\%$ upper confidence bound $1-0.05^{1/T}$ rather than an exact zero.}
\label{fig:cov_densif}
\end{figure}

\paragraph*{Ergodic spectral efficiency}
Figure~\ref{fig:cerg_densif} shows the corresponding behavior of the ergodic
spectral efficiency.
The serving-distance reduction initially improves performance, but the
increasing interference eventually dominates under full reuse.
Fixed reuse delays the degradation, while the scaling policy stabilizes the
large-$N$ rate.

The same qualitative conclusion holds under Rayleigh and Nakagami-$m$ fading:
the fading distribution changes the finite-$N$ values but not the
reuse-dependent scaling behavior.

\begin{figure}[!tb]
\centering
\includegraphics[width=0.96\linewidth]{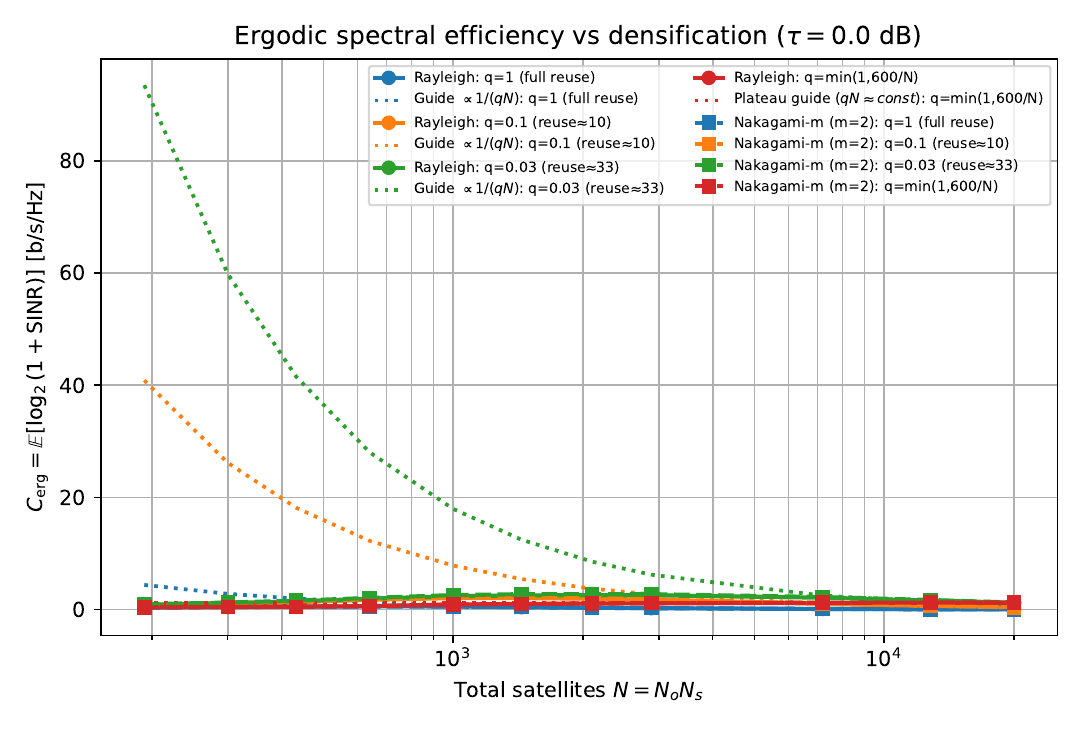}
\caption{Ergodic spectral efficiency versus densification.
The initial gain from the shrinking serving distance is eventually
overwhelmed under full reuse, whereas suitable reuse scaling stabilizes the
large-$N$ performance.
Rayleigh and Nakagami-$m$ fading exhibit the same qualitative behavior.
Error bars are two-sided $95\%$ Student-$t$ confidence intervals computed
from the final independent Monte-Carlo samples.}
\label{fig:cerg_densif}
\end{figure}

\paragraph*{Validation of the $1/(qN)$ order}
Figure~\ref{fig:scaling_check} reports the two normalized quantities
$(qN P_{\mathrm{cov}})/K_{\mathrm{cov}}$ and
$(qN C_{\mathrm{erg}})/K_{\mathrm{erg}}$ using the conservative theorem
constants.
The ratios remain below one, often by several orders of magnitude, confirming
that the bounds capture the correct order while remaining conservative.
This slack is expected because the converse is based on a worst-case
annulus-block construction rather than designed as a tight performance
predictor.

\begin{figure*}[!t]
\centering
\includegraphics[width=0.86\textwidth]{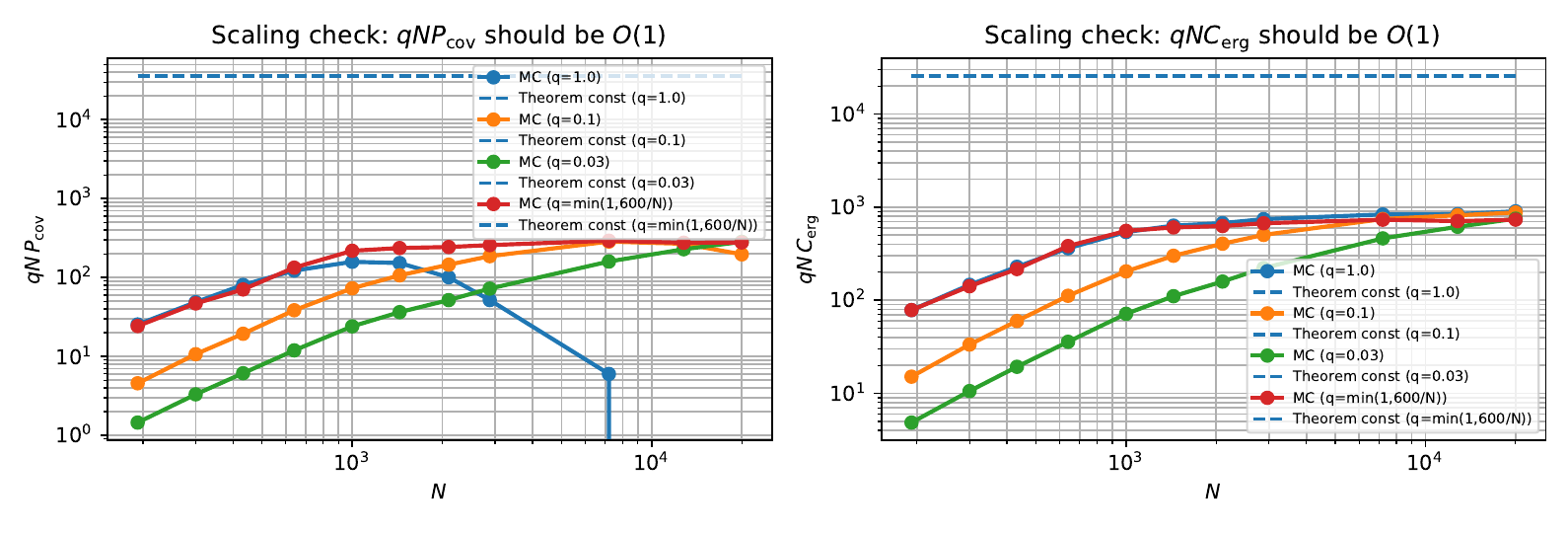}
\caption{Scaling validation against the conservative theorem constants.
The left panel reports $(qN P_{\mathrm{cov}})/K_{\mathrm{cov}}$, and the
right panel reports $(qN C_{\mathrm{erg}})/K_{\mathrm{erg}}$.
The slack reflects the converse nature of the bounds while confirming the
$1/(qN)$ order.}
\label{fig:scaling_check}
\end{figure*}

%-------------------------------------------------------------------------------
\subsection{Mean-SINR optimum and finite-search validation}
\label{subsec:mean_sinr_tightness}

Theorem~\ref{thm:exist_opt_sinr} complements the coverage- and rate-collapse
results by establishing that the invariant-measure averaged mean SINR
eventually decreases as $O(1/(\No\Ns))$. Consequently, its maximization over
the admissible Walker parameters reduces to a finite search. We now illustrate
this design implication numerically under the baseline full-reuse Rayleigh
setting.

For this experiment, conditional Monte Carlo is used to obtain an accurate
estimate of
$J_{\mathrm{SINR}}(\No,\Ns;l_u)=\mathbb{E}[\mathrm{SINR}]$.
The serving-link Rayleigh fading is averaged analytically using
$\mathbb{E}[H_0]=1$, while antithetic exponential samples are used for the
interfering-link fadings. Exactly $T=6000$ independent conditional-Monte-Carlo
realizations are used at every balanced-design operating point. The balanced
Walker-design grid is refined around the observed optimum, and two-sided
$95\%$ Student-$t$ confidence intervals are reported at every operating
point.

As shown in the left panel of
Fig.~\ref{fig:mean_sinr_tightness}, the mean SINR initially increases with
densification because the serving distance decreases. It reaches its maximum
over the simulated balanced-design grid at
$(\widehat N_o^\star,\widehat N_s^\star)=(19,19)$, corresponding to
$\widehat N^\star=361$, with
$\widehat J_{\mathrm{SINR}}^\star\simeq2.572$. Beyond this operating point,
the growth of the co-channel interference dominates the remaining serving-link
gain, and the mean SINR decreases steadily.

\begin{figure*}[!t]
\centering
\includegraphics[width=0.94\textwidth]{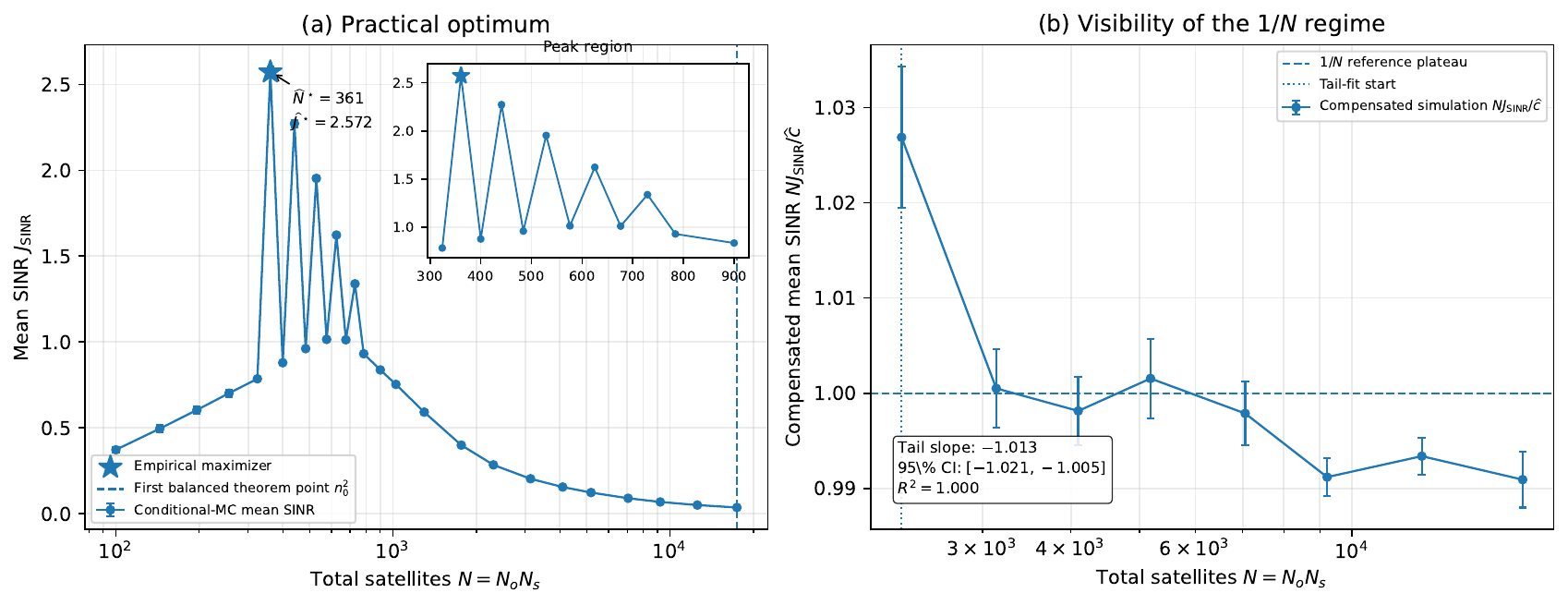}
\caption{
Finite-density mean-SINR optimum and validation of the predicted decay order
under the baseline full-reuse Rayleigh setting.
The left panel reports the conditional-Monte-Carlo estimate of
$J_{\mathrm{SINR}}$ over the balanced Walker-design grid, together with
two-sided $95\%$ Student-$t$ confidence intervals and the empirical maximizer
$(\widehat N_o^\star,\widehat N_s^\star)=(19,19)$.
The dashed vertical line marks the first balanced constellation
$N=n_0^2=17\,424$ satisfying the explicit phase-state--uniform geometric
condition.
The right panel reports the compensated quantity
$N J_{\mathrm{SINR}}/\widehat c$, where $\widehat c$ is obtained from the
post-peak samples. Its nearly constant behavior confirms the
$1/N$ mean-SINR scaling.}
\label{fig:mean_sinr_tightness}
\end{figure*}

The dashed line in the left panel marks
$N=n_0^2=132^2=17\,424$, the first balanced constellation satisfying the
explicit condition $\min\{\No,\Ns\}\ge n_0$ of
Lemma~\ref{lem:block}. This threshold is a sufficient certification point
arising from a phase-state--uniform geometric construction. The theorem and
the numerical search therefore play complementary roles: the theorem provides
a rigorous finite-search guarantee that is uniform over the phase state,
whereas the numerical evaluation identifies the relevant operating point for
the selected baseline parameters.

The post-peak samples provide a direct validation of the analytical scaling
law. Fitting $J_{\mathrm{SINR}}(N)\approx aN^s$ over the simulated tail gives
\[
\widehat s=-1.013,
\qquad
95\%~\mathrm{CI}=[-1.021,-1.005],
\qquad
R^2\simeq1.
\]
Accordingly, the compensated quantity
$N J_{\mathrm{SINR}}/\widehat c$ in the right panel remains close to one.
Thus, the $O(1/N)$ decay established by
Theorem~\ref{thm:exist_opt_sinr} is already clearly visible over the simulated
constellation range.

These results strengthen the finite-density design interpretation of the
converse: densification produces a well-defined pre-collapse operating region,
while the mean SINR subsequently follows the decay order predicted by the
theory. Hence, Theorem~\ref{thm:exist_opt_sinr} supplies the rigorous
finite-search foundation, and the associated numerical search identifies the
baseline-dependent maximizing balanced design.

%-------------------------------------------------------------------------------
\subsection{Sensitivity to sidelobe suppression}
\label{subsec:sidelobe_sensitivity}

We finally examine whether the finite-$N$ behavior is sensitive to the
sidelobe level of the two-tier antenna model.
Since $G_{\mathrm{main}}=g_tg_r$ and $G_{\mathrm{side}}=g_r$, changing $g_r$
while holding $g_t$ fixed would also alter the main-tier gain.
We therefore keep $G_{\mathrm{main}}=10$ and the zenith SNR unchanged and
compare $g_r\in\{1,0.1,0.01\}$, with
$g_t=G_{\mathrm{main}}/g_r$.
These settings correspond  to relative sidelobe levels of
$-10$, $-20$, and $-30$~dB, respectively.
The comparison uses full reuse, Rayleigh fading, nearest-visible association,
and the remaining parameters of Table~\ref{tab:sim_params}.

\paragraph*{Coverage sensitivity}

Figure~\ref{fig:sidelobe_coverage} shows that reducing the relative sidelobe
level from $-10$ to $-20$~dB increases the finite-$N$ coverage probability.
The additional improvement at $-30$~dB is smaller.
The empirical peaks remain in approximately the same densification region,
and all cases eventually decline under full reuse.

\begin{figure}[!tb]
\centering
\includegraphics[width=0.96\linewidth]
{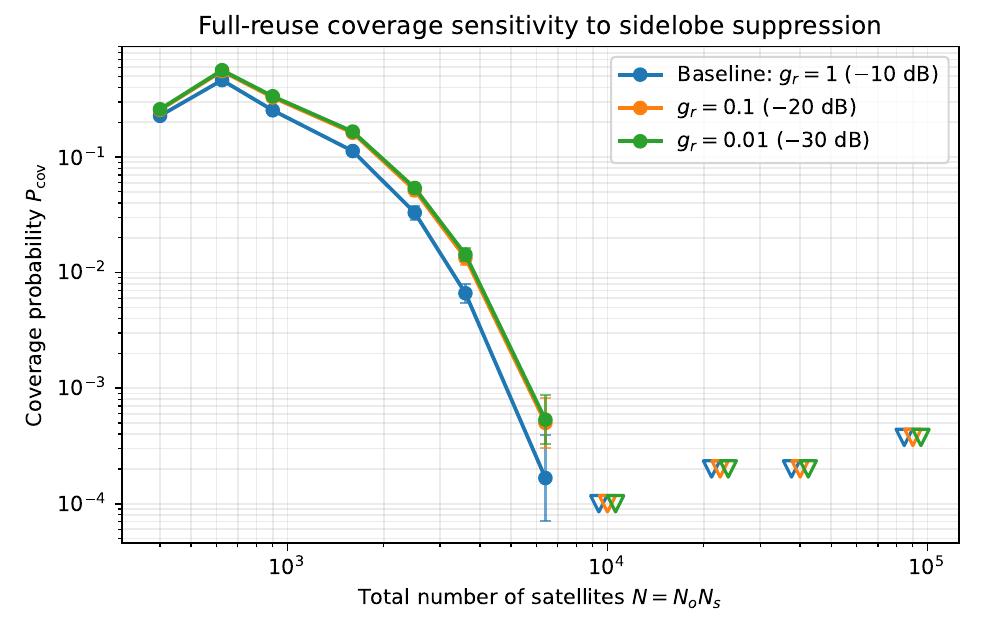}
\caption{
Full-reuse coverage sensitivity to sidelobe suppression under Rayleigh fading.
The main-tier gain and zenith SNR are fixed.
Reducing the relative sidelobe level from $-10$ to $-20$~dB improves
finite-$N$ coverage, whereas the additional improvement at $-30$~dB is
limited.
Error bars are exact two-sided $95\%$ Clopper--Pearson confidence intervals.
When no coverage success is observed, open triangles denote the one-sided
$95\%$ upper confidence bound $1-0.05^{1/T}$ rather than an exact zero;
coincident markers are slightly shifted horizontally for visibility.}
\label{fig:sidelobe_coverage}
\end{figure}

\paragraph*{Rate sensitivity}

Figure~\ref{fig:sidelobe_rate} shows the corresponding ergodic
spectral-efficiency behavior.
Reducing the relative sidelobe level from $-10$ to $-20$~dB improves the
finite-$N$ rate, whereas the additional improvement at $-30$~dB is limited.
All three curves retain the eventual interference-driven degradation.

\begin{figure}[!tb]
\centering
\includegraphics[width=0.96\linewidth]
{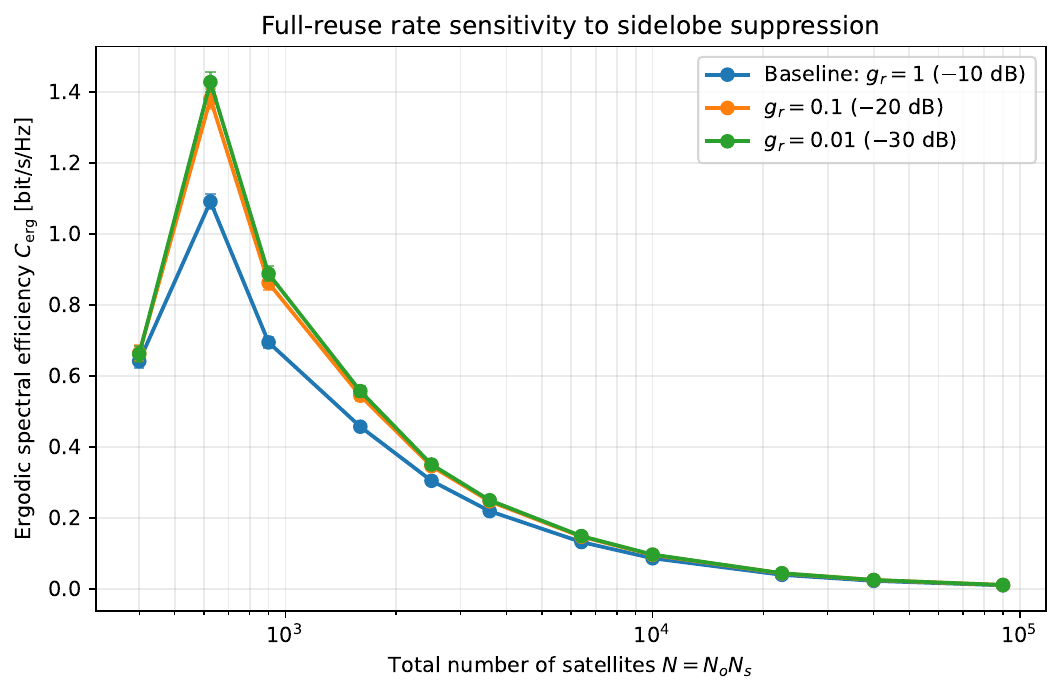}
\caption{
Full-reuse ergodic spectral-efficiency sensitivity to sidelobe suppression.
Reducing the relative sidelobe level from $-10$ to $-20$~dB improves the
finite-$N$ rate, whereas the additional improvement at $-30$~dB is limited.
All cases eventually decline under full reuse.
Error bars are two-sided $95\%$ Student-$t$ confidence intervals computed
from the final independent Monte-Carlo samples.}
\label{fig:sidelobe_rate}
\end{figure}

\paragraph*{Interference decomposition}

The diminishing improvement beyond $-20$~dB is explained by
Fig.~\ref{fig:sidelobe_fraction}.
Under the baseline setting, links with $d>d_g$ contribute a nonnegligible
fraction of the aggregate interference.
This fraction decreases to only a few percent at $-20$~dB and becomes nearly
negligible at $-30$~dB.
The remaining interference is therefore dominated by the growing co-channel
population with $d\le d_g$, whose gain is fixed in this experiment.
Consequently, sidelobe suppression improves the finite-$N$ constants but does
not change the $O(1/N)$ and $O(1/(qN))$ collapse orders.

\begin{figure}[!tb]
\centering
\includegraphics[width=0.96\linewidth]
{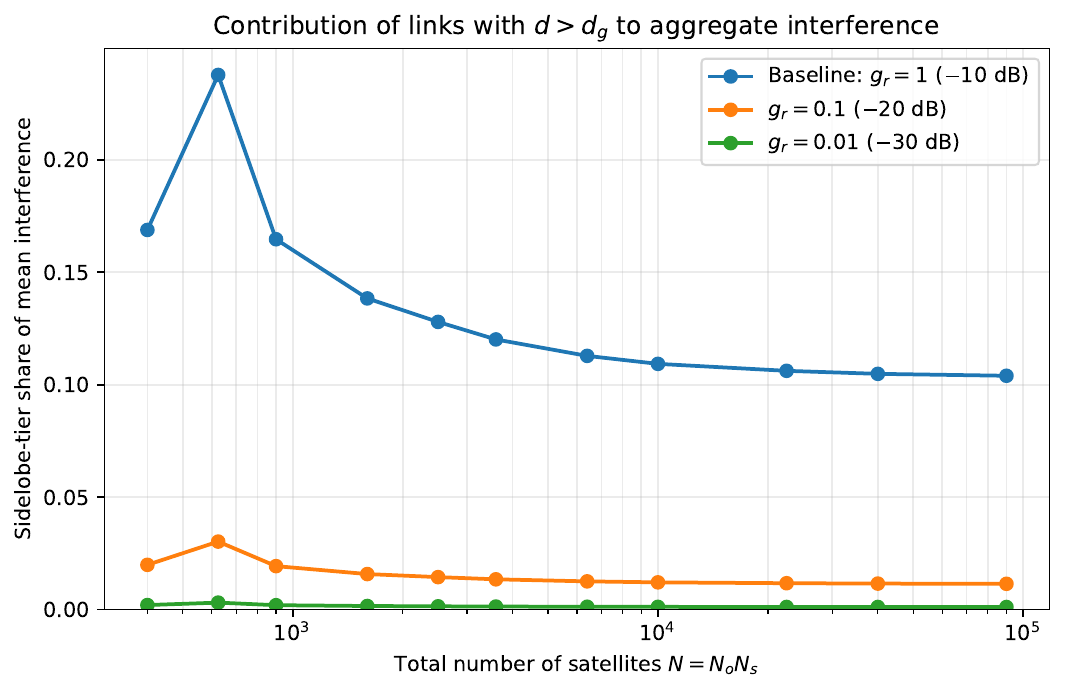}
\caption{
Fraction of the mean aggregate interference contributed by links with
$d>d_g$.
At $-20$~dB, the sidelobe tier already contributes only a few percent,
explaining the limited additional improvement at $-30$~dB.
Error bars are two-sided $95\%$ Student-$t$ confidence intervals computed
from the final independent Monte-Carlo samples.}
\label{fig:sidelobe_fraction}
\end{figure}

\noindent\textbf{Reproducibility.}

All figures are generated by the accompanying Python scripts using
Monte-Carlo sampling over the phase shifts, fading variables, and
per-resource-block activity, conditional Monte Carlo for the mean-SINR study,
and the deterministic annulus-constant calculation described above.
For every operating point, the scripts record the pilot classification, the
final sample size $T$, the number of observed coverage successes, the point
estimate, and the corresponding confidence-interval endpoints. The pilot
samples used for low-coverage classification are generated independently and
are excluded from all reported estimates.

\section{Conclusion}

This paper provided a set of \emph{densification converses} for structured
Walker LEO constellations when the downlink operates under nearest-visible
association and, unless stated otherwise, full frequency reuse on a common
time--frequency resource block. Our performance interpretation is explicitly
tied to the \emph{invariant (stationary) measure} on the Walker phase cell,
which is the natural distribution induced by the constellation dynamics in the
rotating Earth frame. Within this physically meaningful stationary regime, we
identified a simple but fundamental message: \emph{densification alone is not a
remedy under full reuse}. Increasing the total satellite count
$N=\No\Ns$ increases the number of simultaneously visible co-channel
transmitters, and the resulting interference growth eventually dominates the
beneficial reduction in serving distance.

The central technical mechanism is the phase-state--uniform annulus-block
property of Lemma~\ref{lem:block}. It guarantees that, for every phase state and
all sufficiently large constellations, a positive-density fraction of visible
satellites remains inside a fixed distance annulus strictly within the horizon.
This property converts the geometric interference mechanism into explicit
finite-$N$ bounds that are uniform over the invariant-measure state space. The
resulting collapse concerns link-level SINR, coverage, and spectral efficiency
on the considered shared resource block; it does not imply that the aggregate
capacity of the complete multi-resource network vanishes.

The principal design implication concerns reuse planning. Under independent
per-resource-block activity with probability $q$, maintaining nonvanishing
coverage requires $qN=O(1)$, or equivalently
$F=1/q=\Omega(N)$. Thus, a fixed reuse factor can delay the
interference-dominated regime but cannot prevent its eventual onset as the
constellation grows. Sustaining nontrivial per-resource-block performance
instead requires reuse, scheduling, beam hopping, coordinated muting, or
another interference-management mechanism that keeps the effective number of
co-channel visible satellites bounded. The explicit $O(1/(qN))$ bound provides
a direct necessary guideline for scaling this activity level with the
constellation size.

The same geometric mechanism also shows that the mean SINR eventually decreases
with densification, so its maximization over the admissible Walker parameters
reduces to a finite search. Operationally, this suggests selecting a
constellation size within the finite pre-collapse regime and increasing reuse
or scheduling protection as additional satellites are deployed.

The numerical illustration in Section~\ref{sec:numerical} supports this picture
at finite $N$: serving distances initially shrink with densification, but under
full reuse the interference grows nearly linearly in $N$, producing a
coverage/rate knee and eventual collapse. Enforcing the scaling
$qN=\Theta(1)$ instead produces an interference plateau and stabilizes
performance.

Future work should complement these necessary scaling laws with achievable
deterministic reuse and scheduling constructions for the Walker grid. In
particular, an important direction is to develop dynamics-aware frequency-reuse
policies in which the frequency or resource-block assignment adapts
deterministically to the evolving Walker phase state, rather than relying on
independent random thinning. Such policies could exploit the predictable orbital
dynamics to separate satellites that become strong co-channel interferers over
time and may lead to substantially more efficient reuse patterns. Other relevant
extensions include realistic multi-beam antenna patterns, explicit gateway and
traffic constraints, power and bandwidth adaptation, and more general
heterogeneous constellations with a growing number of shells or altitudes that
scale with the constellation size.

\end{document}